\documentclass{amsart}
\pdfoutput=1
\usepackage{amssymb,latexsym}
\usepackage{amstext,amsthm,amscd}
\usepackage{amsmath}
\usepackage{amsgen,amsfonts,amsbsy}
\usepackage{tikz-cd}
\usepackage{mathrsfs}

\theoremstyle{plain}
\newtheorem{theorem}{Theorem}[section]
\newtheorem{proposition}[theorem]{Proposition}
\newtheorem{lemma}[theorem]{Lemma}

\theoremstyle{definition}
\newtheorem{definition}[theorem]{Definition}

\newtheorem{remark}[theorem]{Remark}

\newcommand{\Z}{\mathbb{Z}}
\newcommand{\N}{\mathbb{N}}

\newcommand{\C}{\mathbb{C}}
\newcommand{\set}[2]{\{#1|\ #2\}}
\newcommand{\sub}{\subseteq}

\newcommand{\lcm}{\mathrm{lcm}}

\newcommand{\diag}{\mathrm{diag}}
\newcommand{\U}{\mathrm{U}}
\newcommand{\T}{\Omega}
\newcommand{\CC}{\mathcal{C}}

\newcommand{\SL}{\mathrm{SL}}
\newcommand{\Sp}{\mathrm{Sp}}

\renewcommand{\H}{\mathcal{H}}

\newenvironment{mat2}[4]{\left(\begin{array}{cc} #1&#2\\#3&#4 \end{array}\right)}{}
\newenvironment{matt2}[4]{\left[\begin{array}{cc} #1&#2\\#3&#4 \end{array}\right]}{}
\newenvironment{smat2}[4]{\begin{smallmatrix}#1&#2\\#3&#4\end{smallmatrix}}{}

\begin{document}
\title[Structure of Clifford groups of composite finite quantum systems]{Structure of Clifford groups of composite finite quantum systems}
	 
\author[M.~Korbel\'a\v{r}]{Miroslav~Korbel\'a\v{r}}
\address{Department of Mathematics, Faculty of Electrical Engineering, 
	Czech Technical University in Prague, Jugosl\'avsk\'ych partiz\'an\r{u} 1, 
	166 27 Prague 6, Czech Republic}
\email{korbemir@fel.cvut.cz}

\author[J.~Tolar]{Ji\v{r}\'i Tolar}
\address{Department of Physics,
	Faculty of Nuclear Sciences and  Physical  Engineering,
	Czech Technical University in Prague, B\v{r}ehov\'{a} 7,
	115 19 Prague 1, Czech Republic}
\email{jiri.tolar@fjfi.cvut.cz}


 \begin{abstract}
In this paper the Clifford groups of general multipartite quantum systems with configuration space $\Z_{n_1}\oplus\cdots\oplus\Z_{n_k}$, $k\geq 1$ are studied. It is known that the Clifford group is a natural semidirect product
provided the dimension $N=n_1\cdots n_k$ of the corresponding Hilbert space  is an odd number.

For even $N$  special results on the Clifford groups are scattered
in the mathematical literature, but they do not concern the semidirect product structure.
Using the relation of generators of the associated symplectic group $\Sp_{[n_1,\dots,n_k]}$  we prove that for even $N=n_1\cdots n_k$ both Clifford group and the projective Clifford group are natural semidirect products if and only if $N$ is not divisible by four.

\end{abstract}

\maketitle \vspace{2ex}


\thanks{}

Keywords: {finite-dimensional quantum mechanics, composite quantum system,
	generalized Pauli matrices, automorphisms, (projective) Clifford group, semidirect product}


\section{Introduction}\label{introduction}

This paper generalizes the results obtained in our paper \cite{KorbTolar23} 
for the class of simple finite quantum systems -- qudits with cyclic groups $\Z_N$
as configuration spaces. See also an unpublished paper \cite{Hashimoto} 
where similar results were obtained by another method.

Our aim here is to consider all finite quantum systems with arbitrary finite abelian groups 
$\Z_{n_1}\oplus\cdots\oplus\Z_{n_k}$ as configuration spaces.
Mathematically, the Hilbert spaces of the corresponding quantum systems are constructed as 
tensor products of the qudit Hilbert spaces
 $$\mathscr{H} = \mathscr{H}_{n_1}\otimes\dots\otimes\mathscr{H}_{n_k},$$
i.e. the quantum system is a composite system and its subsystems are just simple qudits.
Let us note that, physically, any such tensor product structure makes sense, provided
that there is an experimental way to manipulate and probe the constituent quantum subsystems
by available interactions and measurements \cite{Zanardi}.

Our paper \cite{KorbTolar23} brought striking results concerning the structure of 
projective Clifford groups in even-dimensional Hilbert spaces. 
It provided an important result concerning a natural question (coming already from classical mechanics) --
is the Clifford group a semidirect product?
It is well known that the answer is positive if the dimension of Hilbert space is odd
\cite{Gross}. However, the answer was not apparent in even dimensions \cite{Appleby}. 

In \cite{KorbTolar23} the simple case of projective Clifford group 
corresponding to one qudit of even dimension $N$ was considered.
We proved that for dimensions divisible by four,
projective Clifford groups are \emph{not} semidirect products, 
but for dimensions $N$ with $N/2$ odd they, surprisingly, are.
Our proof was based on appropriate group presentations of the groups 
$\SL(2,\Z_N)$ and $\SL(2,\Z_{N})\ltimes \Z_{N}^2$, and
a suitable description of the projective Clifford group.

However, for composite systems of even dimensions such an approach is not suitable 
for direct extension of our results to projective Clifford groups.
But let us note that in the last section of our paper \cite{KorbTolar23} 
a conjecture was suggested concerning the general case of finite-dimensional quantum mechanics
with Hilbert space of even dimension,
i.e., when the configuration space is an arbitrary finite abelian group
$A=\Z_{n_1}\oplus\cdots\oplus\Z_{n_k}$ of even order.

\

\textbf{Conjecture:} Let the configuration space of finite-dimensional quantum mechanics
be a finite abelian group  $A$ of an even order,
i.e. the cardinality $|A|$ of $A$ is an even number.
Then the following are equivalent:
\begin{itemize}
	\item The corresponding Clifford group is a  semidirect product 
	       related to its natural exact sequence with its Heisenberg group.
	\item The corresponding \emph{projective} Clifford group is a  semidirect product 
	       related to its natural exact sequence with its \emph{projective} Heisenberg group.
	\item $|A|=2\ (\mathrm{mod}\ 4)$.
\end{itemize}

In the present paper we prove this Conjecture, as our main result (Theorems \ref{main_theorem_1} and \ref{main_theorem_2}).

\

It is a well-known fact that every finite abelian group is isomorphic to a direct product (or direct sum) of finite cyclic groups.
Instead of working with a finite abstract abelian group, we will rather consider a fixed cartesian product
of a finite set of finite cyclic groups.

In fact, in \cite{StovTolar84} a complete description of quantum kinematics
was presented for the class of systems
whose underlying configuration spaces are finite sets equipped with
the structure of finite Abelian groups.
\footnote{
The paper \cite{StovTolar84} was the first where Schwinger's incomplete proposal
of tensor product decomposition was corrected.}

\

In this paper, the presented Conjecture will be proved by combining  an idea proposed
in \cite{Hashimoto} and the approach used in \cite{KorbTolar23} where a simple qudit of dimension $N$ was dealt with. Our current method is based on the detailed description of the structure and generators of symplectic groups associated to the multipartite systems that was already investigated in our previous work \cite{KorbTolar10,KorbTolar12}.\footnote{
In the course of finishing the manuscript we found a paper by C. Galindo
\cite{Galindo} in which our Conjecture is proved by using methods of
finite group cohomology.}

\section{Clifford and symplectic groups for the multipartite systems}

Let us briefly recall the definition of Clifford group and its associated symplectic group.

Let $P_n$ and $Q_n$ be the generalized unitary Pauli matrices of the operators of momentum and position. They fulfill the equality $Q_nP_n=\omega_n P_nQ_n$ where $\omega_n=e^{\frac{2\pi\mathbf{i}}{n}}\in\C$. Let $\U(n)$ be the group of all unitary $n\times n$ complex matrices and $I_n$ denote the identity $n\times n$ matrix.

\

For a multipartite system with the the abelian group $\Z_{n_1}\oplus\cdots\oplus\Z_{n_k}$  as the configuration space, where $k\geq 1$ and $n_1,\dots,n_k\geq 2$ are positive integers, we define  the \emph{Heisenberg group}
$$\H_{(n_1,\dots,n_k)}=\set{\lambda\cdot P_{n_1}^{i_1}Q_{n_1}^{j_1}\otimes\cdots\otimes
P_{n_k}^{i_k}Q_{n_k}^{j_k}}{i_1,j_1,\dots,i_k,j_k\in\Z, \lambda\in\U(1)}$$
  of Kronecker product of generalized Pauli matrices and the \emph{Clifford group}  $$\CC_{(n_1,\dots,n_k)}=N_{\U(N)}(\H_{(n_1,\dots,n_k)})$$ as the normalizer of the Heiseberg group $\H_{(n_1,\dots,n_k)}$  in the unitary group $\U(N)$,  where $N=n_1\cdots n_k$.

An element $U\in\CC_{(n_1,\dots,n_k)}$ has a natural action on $\H_{(n_1,\dots,n_k)}$ as an automorphism  of the form  $X\mapsto UXU^{-1}$ for $X\in \H_{(n_1,\dots,n_k)}$. More generally, for a global linear group $GL(n,\C)$ we denote by $Ad_M$ its \emph{inner automorphism} associated to $M\in GL(n,\C)$, i.e. $Ad_M(X)=MXM^{-1}$ for every $X\in GL(n,\C)$.

Since $\H_{(n_1,\dots,n_k)}$ contains an (ortonormal) basis of the vector space of all complex $N\times N$ matrices (see e.g. \cite{AschChilWoc}), it follows that for $U,U'\in GL(N,\C)$ the maps $Ad_U$ and $Ad_{U'}$ coincide on $\H_{(n_1,\dots,n_k)}$  if and only if $U=\lambda U'$ for some $\lambda\in\C\setminus\{0\}$ (we denote this as $U\sim U'$), in other words if $U$ and $U'$ are equal as projective linear transformations (of the projective space $P(\C^N)$).

Since these Clifford operations, i.e. the corresponding inner automorphisms,
are not affected by phase factors, it is natural to study the
\emph{projective Heisenberg group}
$$\overline{\H}_{(n_1,\dots,n_k)}=\set{Ad_{M}}{M\in\H_{(n_1,\dots,n_k)}}$$
and the \emph{projective Clifford group}
$$\overline{\CC}_{(n_1,\dots,n_k)}=\set{Ad_{M}}{M\in\CC_{(n_1,\dots,n_k)}}.$$

We can easily see that $\overline{\CC}_{(n_1,\dots,n_k)}$  is the normalizer of $\overline{\H}_{(n_1,\dots,n_k)}$ in the group $\overline{U}(N)=\set{Ad_{M}}{M\in\U(N)}$, i.e. $\overline{\CC}_{(n_1,\dots,n_k)}=N_{\overline{U}(N)}(\overline{\H}_{(n_1,\dots,n_k)})$, and that there is an isomorphism of the factor-groups $\CC_{(n_1,\dots,n_k)}/\H_{(n_1,\dots,n_k)}\cong\overline{\CC}_{(n_1,\dots,n_k)}/\overline{\H}_{(n_1,\dots,n_k)}$ of the form $M\cdot \H_{(n_1,\dots,n_k)}\mapsto Ad_{M}\cdot \overline{\H}_{(n_1,\dots,n_k)}$.

\begin{remark}
Let us consider the projective global linear group $\overline{GL}(N,\C))=\set{Ad_M}{M\in GL(N,\C)}$. In \cite{KorbTolar12} a more general normalizer than $N_{\overline{U}(N)}(\overline{\H}_{(n_1,\dots,n_k)})$
was studied, namely the normalizer $N_{\overline{GL}(N,\C)}(\overline{\H}_{(n_1,\dots,n_k)})$ (see \cite[Definition 5.1]{KorbTolar12}). However, it turned out that $$N_{\overline{GL}(N,\C)}(\overline{\H}_{(n_1,\dots,n_k)})\sub \overline{U}(N)$$
(this follows from the generating set of this normalizer, see the results \cite[Theorem 5.9]{KorbTolar12} and \cite[Corollary 11]{HPPT02}) and that means that both these two normalizers are, in fact, equal. Hence $$\overline{\CC}_{(n_1,\dots,n_k)}=N_{\overline{GL}(N,\C)}(\overline{\H}_{(n_1,\dots,n_k)})$$ and we may therefore apply the results from \cite{KorbTolar12} for the projective Clifford group defined above.
\end{remark}

Let us now recall the main theorem from \cite{KorbTolar12} that states that the factor-group  $\CC_{(n_1,\dots,n_k)}/\H_{(n_1,\dots,n_k)}\cong\overline{\CC}_{(n_1,\dots,n_k)}/\overline{\H}_{(n_1,\dots,n_k)}$ is isomorphic to a certain symplectic group $\Sp_{[n_1,\dots,n_k]}$ (see Definition $\ref{def_11}$ below).

\begin{theorem}\cite[Theorem 5.9]{KorbTolar12}\label{theorem_1}

Let $k\geq 1$ and $n_1,\dots,n_k\geq 2$ be positive integes. Then:

For every $Ad_{U}\in \overline{\CC}_{(n_1,\dots,n_k)}$, with $U\in \CC_{(n_1,\dots,n_k)}$, there is a unique (matrix equivalence class) $[\mathbb{H}]\in \Sp_{[n_1,\dots,n_k]}$ such that for all $i_1,j_1,\dots,i_k,j_k\in\Z$ there holds
$$Ad_{U}\left(P_{n_1}^{i_1}Q_{n_1}^{j_1}\otimes\cdots\otimes P_{n_k}^{i_k}Q_{n_k}^{j_k}\right)\sim P_{n_1}^{i'_1}Q_{n_1}^{j'_1}\otimes\cdots\otimes P_{n_k}^{i'_k}Q_{n_k}^{j'_k}$$
where $\begin{pmatrix}i'_1\\ j'_1\\\vdots\\i'_k\\j'_k \end{pmatrix}=\mathbb{H}\begin{pmatrix}i_1\\ j_1\\\vdots\\i_k\\j_k \end{pmatrix}$.

 The map $\overline{\pi}: \overline{\CC}_{(n_1,\dots,n_k)}\to \Sp_{[n_1,\dots,n_k]}$, $Ad_{U}\mapsto[\mathbb{H}]$ is a group epimorphism with $\ker(\overline{\pi})=\overline{\H}_{(n_1,\dots,n_k)}$.

In particular, there is a group isomorphism $$\overline{\CC}_{(n_1,\dots,n_k)}/\overline{\H}_{(n_1,\dots,n_k)}\cong \Sp_{[n_1,\dots,n_k]}.$$
\end{theorem}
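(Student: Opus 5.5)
The plan follows the standard route for Clifford groups: first read off from $U$ an index-level map, then show this map is a "symplectic" matrix, and finally prove surjectivity and identify the kernel. Throughout, write $X_{\mathbf{v}}=P_{n_1}^{i_1}Q_{n_1}^{j_1}\otimes\cdots\otimes P_{n_k}^{i_k}Q_{n_k}^{j_k}$ for $\mathbf{v}=(i_1,j_1,\dots,i_k,j_k)$, so that $\mathbf{v}$ is read in $\Z_{n_1}^2\oplus\cdots\oplus\Z_{n_k}^2$.

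\emph{Step 1: existence and uniqueness of $[\mathbb{H}]$.} Let $U\in\CC_{(n_1,\dots,n_k)}$. Then $UX_{\mathbf{v}}U^{-1}\in\H_{(n_1,\dots,n_k)}$, so $UX_{\mathbf{v}}U^{-1}\sim X_{f(\mathbf{v})}$ for a unique $f(\mathbf{v})$. Uniqueness holds because the $N^2$ operators $X_{\mathbf{v}}$ are pairwise non-proportional; they are linearly independent by trace orthogonality. Since $X_{\mathbf{v}}X_{\mathbf{w}}\sim X_{\mathbf{v}+\mathbf{w}}$, the map $f$ is a group automorphism of $A\oplus\hat A$. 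Hence $f$ is given by an integer matrix $\mathbb{H}$ whose block entries respect the moduli, i.e.\ whose $(r,s)$ block is a homomorphism $\Z_{n_s}^2\to\Z_{n_r}^2$. This is exactly the "matrix equivalence class" of Definition~\ref{def_11}.

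Next we check that $\mathbb{H}$ preserves the commutation form. We have $X_{\mathbf{v}}X_{\mathbf{w}}=\omega(\mathbf{v},\mathbf{w})X_{\mathbf{w}}X_{\mathbf{v}}$, where
$$\omega(\mathbf{v},\mathbf{w})=\prod_r\omega_{n_r}^{\,j_r i'_r-i_r j'_r}.$$
Conjugation by $U$ preserves these commutator scalars, so $\omega(f\mathbf{v},f\mathbf{w})=\omega(\mathbf{v},\mathbf{w})$. I expect this to be precisely the defining condition of $\Sp_{[n_1,\dots,n_k]}$. The map $Ad_U\mapsto[\mathbb{H}]$ depends only on $Ad_U$, and it is a homomorphism because $f_{UV}=f_U\circ f_V$.

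\emph{Step 2: kernel.} If $[\mathbb{H}]$ is the identity, then $UX_{\mathbf{v}}U^{-1}=\chi(\mathbf{v})X_{\mathbf{v}}$ with $\chi$ a character of $A\oplus\hat A$, since multiplicativity forces the scalars to form a character. The form $\omega$ is nondegenerate, so every such character equals $\omega(\mathbf{u},\cdot)$ for some $\mathbf{u}$, and then $Ad_{X_{\mathbf{u}}}$ acts the same way as $Ad_U$. Because the $X_{\mathbf{v}}$ span all matrices, $U\sim X_{\mathbf{u}}$, so $Ad_U\in\overline{\H}_{(n_1,\dots,n_k)}$. The reverse inclusion is clear.

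\emph{Step 3: surjectivity.} This is the main obstacle. There are two routes.

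The first is abstract. Given $[\mathbb{H}]\in\Sp_{[n_1,\dots,n_k]}$, the operators $X'_{\mathbf{v}}:=X_{\mathbb{H}\mathbf{v}}$ satisfy the Heisenberg relations up to a 2-cocycle of phases. To fix the phases, choose for each generator $e_m$ of $A\oplus\hat A$ a scalar $c_m$ so that $c_mX_{\mathbb{H}e_m}$ has the same order as $X_{e_m}$ and the commutation relations match. This yields a projective representation of the Heisenberg group with the same central character. By Stone--von Neumann uniqueness, which holds because the center acts by a fixed faithful character and the representation is irreducible of dimension $N$, this representation is unitarily equivalent to the standard one. That gives $U$ with $Ad_U$ mapping to $[\mathbb{H}]$. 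The delicate point is the order condition: for even $n_r$, $(P^aQ^b)^{n}$ may equal $-I$, so the phase choice is needed, and it is always available by taking $n$-th roots.

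The second route is concrete. Use the generating set of $\Sp_{[n_1,\dots,n_k]}$ from \cite{KorbTolar10}: local $\SL(2,\Z_{n_r})$ generators, block swaps for equal moduli, and "two-qudit" transvections between factors $r,s$ coupled via $\gcd(n_r,n_s)$. Then exhibit explicit unitaries (Fourier, phase, controlled-shift type operators) realizing each generator.

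I would present the second route to keep it in line with \cite{KorbTolar12}, with the first route as a fallback for justification.
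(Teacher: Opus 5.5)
Your proof is correct. The paper itself gives no proof of this theorem. It imports it from \cite[Theorem 5.9]{KorbTolar12}, and the Remark preceding it indicates how it is obtained there: surjectivity comes from an explicit generating set of the normalizer, each generator realized by a concrete matrix, with \cite[Corollary 11]{HPPT02} for the single-qudit part. So your Steps 1--2 are the standard identification that both approaches share, your route 2 is the source's route, and your route 1 is genuinely different.

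In route 1, you first rescale $X_{\mathbb{H}e_m}$ so that its $n_r$-th power is $I$; this is exactly what absorbs $(P_nQ_n)^n=-I$ for even $n$. The commutation scalars then match automatically by symplecticity. Nondegeneracy of $\omega$ forces the character of any $N$-dimensional representation with the given central character to vanish off the centre, so the two representations are equivalent. Since the standard one is irreducible, the intertwiner can be normalized to be unitary; irreducibility of the new representation is not even needed.

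This route requires nothing about generators of $\Sp_{[n_1,\dots,n_k]}$, which is a substantial result in its own right. Because Steps 1--2 never use unitarity, it also gives for free that $N_{\overline{GL}(N,\C)}(\overline{\H}_{(n_1,\dots,n_k)})\sub\overline{U}(N)$, the fact the Remark extracts from the generating set.

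What the generator route buys is explicit implementing unitaries and the generating set itself, which the paper exploits in Section 4. As written, however, your route 2 is only an outline resting on an external generation theorem. For general $k$ it would also need the $k$-partite result of \cite{KorbTolar12}, not just the bipartite one of \cite{KorbTolar10}. So route 1 should be the proof you actually write down.

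One claim you left as an expectation should be checked explicitly, since it is the only place where the particular shape of Definition~\ref{def_11} enters. From $(\mathbb{H}^{\ast})_{ij}=\frac{n_i}{n_j}(\mathbb{H})_{ji}^{T}$ one gets $\mathbb{D}\mathbb{H}^{\ast}=\mathbb{H}^{T}\mathbb{D}$. Hence $\mathbb{H}^{\ast}\mathbb{J}\mathbb{H}\equiv\mathbb{J}$ is equivalent to $\mathbb{H}^{T}\mathbb{D}\mathbb{J}\mathbb{H}\equiv\mathbb{D}\mathbb{J}\pmod{L}$ with $L=\lcm(n_1,\dots,n_k)$. This is precisely the invariance of $\omega(\mathbf{v},\mathbf{w})=\exp\bigl(-\tfrac{2\pi\mathbf{i}}{L}\,\mathbf{v}^{T}\mathbb{D}\mathbb{J}\mathbf{w}\bigr)$. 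Likewise, the congruence $\equiv$ is just entrywise reduction modulo $n_i$ in the $i$-th block row. This confirms that the classes $[\mathbb{H}]$ are exactly the endomorphisms of $\Z_{n_1}^2\oplus\cdots\oplus\Z_{n_k}^2$, as you asserted.
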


\

\begin{definition}\label{def_11}(\textbf{of the symplectic group $\Sp_{[n_{1},\dots,n_{k}]}$})

Let $\mathcal{S}_{[n_{1},\dots,n_{k}]}$ be a monoid consisting of
$k\times k$ matrices $\mathbb{H}$ composed of $2\times 2$ blocks, where the block on position $(i,j)$, $i,j\in\{1,\dots,k\}$, has a form
  $$(\mathbb{H})_{ij}=\tfrac{n_{i}}{\gcd(n_{i},n_{j})}A_{ij}$$
with $A_{ij}$ beeing $2\times
2$ matrices over $\Z$. We consider $\mathcal{S}_{[n_{1},\dots,n_{k}]}$ as a monoid with the usual multiplication of matrices.

An adjoint $\mathbb{H}^{\ast}\in\mathcal{S}_{[n_{1},\dots,n_{k}]}$ of
$\mathbb{H}\in\mathcal{S}_{[n_{1},\dots,n_{k}]}$
 is then defined by
$$(\mathbb{H}^{\ast})_{ij}=\tfrac{n_{i}}{\gcd(n_{i},n_{j})}A_{ji}^{T}$$
for $i,j\in\{1,\dots,k\}$.

Using a diagonal matrix
 $$\mathbb{D}=\diag\Big(\tfrac{\lcm(n_{1},\dots,n_{k})}{n_{1}}I_{2},
 \dots,\tfrac{\lcm(n_{1},\dots,n_{k})}{n_{k}}I_{2}\Big)
 \in\mathcal{S}_{[n_{1},\dots,n_{k}]}$$
 we  define a congruence $\equiv$ on
$\mathcal{S}_{[n_{1},\dots,n_{k}]}$ with respect to multiplication and the adjoint operation as
 $$\mathbb{H}\equiv \mathbb{G}\ \Leftrightarrow\
 \mathbb{D}\mathbb{H}=\mathbb{D}\mathbb{G}\ (\mathrm{mod}\ \lcm(n_{1},\dots,n_{k})) \quad \textrm{for}
\quad \mathbb{H},\mathbb{G}\in\mathcal{S}_{[n_{1},\dots,n_{k}]}
 $$  and denote $[\mathbb{H}]$ the equivalence class of $\mathbb{H}$.

The \emph{symplectic group} $\Sp_{[n_{1},\dots,n_{k}]}$ is now defined as
 \begin{equation} \label{group}
 \Sp_{[n_{1},\dots,n_{k}]}=
 \set{[\mathbb{H}]\ }{ \mathbb{H}\in \mathcal{S}_{[n_{1},\dots,n_{k}]}, \ \ \mathbb{H}^{\ast}\mathbb{J}\mathbb{H}\equiv \mathbb{J}}.
 \end{equation}
where $
 \mathbb{J}=\diag(J_{2},\dots,J_{2})\in\mathcal{S}_{[n_{1},\dots,n_{k}]}$
and $J_{2}=\left(\begin{array}{cc}0&1\\-1&0\end{array}\right)$.
\end{definition}

\section{Clifford groups and exact sequences}

The  short exact sequences
\begin{equation*}
     1 \rightarrow \H_{(n_1,\dots,n_k)} \stackrel{\nu}{\longrightarrow} \CC_{(n_1,\dots,n_k)}
\stackrel{\pi}{\longrightarrow}\Sp_{[n_1,\dots,n_{k}]}  \rightarrow 1 .
\end{equation*}
and
\begin{equation*}
    1 \rightarrow \overline{\H}_{(n_1,\dots,n_k)} \stackrel{\overline{\nu}}{\longrightarrow} \overline{\CC}_{(n_1,\dots,n_k)}
\stackrel{\overline{\pi}}{\longrightarrow}\Sp_{[n_1,\dots,n_{k}]}  \rightarrow 1 .
\end{equation*}
comming from Theorem \ref{theorem_1}, where $\nu$ and $\overline{\nu}$ are inclusions and $\pi(U)=\overline{\pi}(Ad_U)$ for $U\in \CC_{(n_1,\dots,n_k)}$,
will be  called \emph{natural}.

\

As a basic fact for our further considerations we will use the following well known property of exact sequences from the group theory:

\

\emph{The Clifford group (the projective Clifford group, resp.) has a structure of a semidirect product related to its natural exact sequence if and only this sequence is right splitting.}

\

The right splitting of the natural exact sequence for the Clifford group $\CC_{(n_1,\dots,n_k)}$ means that there is a group homomorphism $\vartheta: \Sp_{[n_1,\dots,n_{k}]}\to \CC_{(n_1,\dots,n_k)}$ such that $\pi\circ\vartheta=id_{\Sp_{[n_1,\dots,n_{k}]}}$ (and similarly for the projective Clifford group).

\

The next statement recalls the fact that two multipartite systems with isomorphic configuration space have also isomorphic natural exact sequences of their Clifford groups (projective Clifford groups, resp.). An interesting fact is that the isomorphism is provided via a permutation matrix. For completeness, we provide a short proof.

\begin{proposition}\label{equivalent}
Let $k,\ell\geq1$ and $n_1,\dots,n_k,m_1\dots,m_\ell\geq 2$ be positive integers such that  $\Z_{n_1}\oplus\dots\oplus\Z_{n_k}\cong\Z_{m_1}\oplus\cdots\oplus\Z_{m_\ell}$. Then:
\begin{enumerate}
 \item The natural exact sequences of $\CC_{(n_1,\dots,n_k)}$ and $\CC_{(m_1,\dots,m_\ell)}$ are isomorphic via a permutation matrix. In particular, $\CC_{(n_1,\dots,n_k)}$ is a natural semidirect product if and only if $\CC_{(m_1,\dots,m_\ell)}$ is a natural semidirect product.
 \item The natural exact sequences of $\overline{\CC}_{(n_1,\dots,n_k)}$ and $\overline{\CC}_{(m_1,\dots,m_\ell)}$ are isomorphic. In particular, $\overline{\CC}_{(n_1,\dots,n_k)}$ is a natural semidirect product if and only if $\overline{\CC}_{(m_1,\dots,m_\ell)}$ is a natural semidirect product.
\end{enumerate}
\end{proposition}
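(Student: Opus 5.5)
The plan is to realize an abstract isomorphism $\varphi:A=\Z_{n_1}\oplus\cdots\oplus\Z_{n_k}\to A'=\Z_{m_1}\oplus\cdots\oplus\Z_{m_\ell}$ as conjugation by a permutation matrix, and to show that this conjugation carries $\H_{(n_1,\dots,n_k)}$ onto $\H_{(m_1,\dots,m_\ell)}$.

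First note that $N=|A|=|A'|$, so both Clifford groups live in the same $\U(N)$. Identify $\C^N=\mathscr{H}_{n_1}\otimes\cdots\otimes\mathscr{H}_{n_k}$ with the space of functions on $A$, with orthonormal basis $\{e_x\}_{x\in A}$ given by the tensor products of standard basis vectors. Fix the convention that $Q_n$ is the cyclic shift and $P_n$ the diagonal phase matrix $\diag(1,\omega_n,\dots,\omega_n^{n-1})$; the opposite convention only swaps roles below. I would first give an intrinsic description of the Heisenberg group:
$$\H_{(n_1,\dots,n_k)}=\set{\lambda\, M_\chi T_a}{\lambda\in\U(1),\ a\in A,\ \chi\in\widehat{A}},$$
where $T_a e_x=e_{x+a}$ and $M_\chi e_x=\chi(x)e_x$. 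This holds because $Q_{n_1}^{j_1}\otimes\cdots\otimes Q_{n_k}^{j_k}=T_{(j_1,\dots,j_k)}$. It also uses that $P_{n_1}^{i_1}\otimes\cdots\otimes P_{n_k}^{i_k}$ is multiplication by the character $x\mapsto\prod_r\omega_{n_r}^{i_rx_r}$, and every character of $A$ has this form. The same description holds for $A'$ with $\widehat{A'}$.

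Next define the permutation matrix $T_\varphi$ by $T_\varphi e_x=e_{\varphi(x)}$. A direct computation on basis vectors gives
$$T_\varphi T_a T_\varphi^{-1}=T_{\varphi(a)},\qquad T_\varphi M_\chi T_\varphi^{-1}=M_{\chi\circ\varphi^{-1}}.$$
Since $a\mapsto\varphi(a)$ is a bijection $A\to A'$ and $\chi\mapsto\chi\circ\varphi^{-1}$ is a bijection $\widehat{A}\to\widehat{A'}$, this yields $Ad_{T_\varphi}(\H_{(n_1,\dots,n_k)})=\H_{(m_1,\dots,m_\ell)}$. As $T_\varphi$ is unitary, $Ad_{T_\varphi}$ is an automorphism of $\U(N)$. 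It therefore maps the normalizer $\CC_{(n_1,\dots,n_k)}$ onto the normalizer $\CC_{(m_1,\dots,m_\ell)}$, compatibly with the inclusions $\nu$.

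To complete the isomorphism of exact sequences, I would take the third vertical map $\sigma:\Sp_{[n_1,\dots,n_k]}\to\Sp_{[m_1,\dots,m_\ell]}$ to be the isomorphism induced on the quotients $\CC/\H$, transported through the identifications $\CC/\H\cong\Sp$ of Theorem \ref{theorem_1}. Explicitly, $\sigma(\pi(U))=\pi'(T_\varphi UT_\varphi^{-1})$. This is well defined because $\ker\pi=\H_{(n_1,\dots,n_k)}$ is mapped onto $\ker\pi'$, and the diagram commutes by construction. Part (1) then follows by transporting a splitting: if $\vartheta$ splits $\pi$, then $Ad_{T_\varphi}\circ\vartheta\circ\sigma^{-1}$ splits $\pi'$, and symmetrically using $\varphi^{-1}$.

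For part (2), the automorphism $Ad_{Ad_{T_\varphi}}$ of $\overline{U}(N)$, that is $Ad_M\mapsto Ad_{T_\varphi MT_\varphi^{-1}}$, maps $\overline{\H}$ onto $\overline{\H}'$ and hence $\overline{\CC}$ onto $\overline{\CC}'$, using $\overline{\CC}=N_{\overline{U}(N)}(\overline{\H})$. The same argument with $\overline{\pi}$ and the same $\sigma$ gives the isomorphism of sequences and the transfer of splittings.

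I expect the main obstacle to be the intrinsic description of $\H_{(n_1,\dots,n_k)}$, since everything else is formal. Concretely, one must check that the tensor products of powers of $P_{n_r}$ exhaust all of $\widehat{A}$. For this I would count: the map $(i_1,\dots,i_k)\mapsto$ character is injective modulo $(n_1,\dots,n_k)$, and $|\widehat{A}|=|A|$. One must also check that $\U(1)$ phases absorb all reorderings, which follows from $Q_nP_n=\omega_nP_nQ_n$.
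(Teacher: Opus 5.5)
Your proof is correct, but it takes a different route from the paper's.

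\textbf{The paper's route.} The paper never uses an abstract isomorphism $\varphi$. It assembles the permutation matrix in three steps:
\begin{enumerate}
\item Chinese-remainder permutation matrices $R$ with $R\H_{(a,b)}R^{-1}=\H_{(ab)}$ for coprime $a,b$ (quoted from the literature) pass from $(n_1,\dots,n_k)$ and from $(m_1,\dots,m_\ell)$ to lists of prime powers.
\item Uniqueness of the primary decomposition shows these two lists agree up to a permutation.
\item A permutation matrix reordering the Kronecker factors matches them, giving $U=R_2^{-1}R_3R_1$.
\end{enumerate}
The remaining steps are the same formal ones as yours: passing from $\H$ to $\CC$, taking the induced map on $\Sp$, and handling the projective case.

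\textbf{What your route buys.} You describe $\H$ intrinsically as $\U(1)$ times character multiplications $M_\chi$ times translations $T_a$. This collapses all three steps into the single identity $T_\varphi M_\chi T_aT_\varphi^{-1}=M_{\chi\circ\varphi^{-1}}T_{\varphi(a)}$. The argument is self-contained and works for any isomorphism $\varphi$. It needs neither the structure theorem nor the two quoted facts, since both are special cases of $T_\varphi$. Taking $A'=A$, it also shows in passing that every automorphism of $A$ is implemented by a permutation matrix lying in the Clifford group. The only cost is the elementary count showing that tensor products of powers of the diagonal Pauli matrices exhaust $\widehat{A}$, which you supply.

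\textbf{A cosmetic remark.} With $Q_n$ the forward shift and $P_n=\diag(1,\omega_n,\dots,\omega_n^{n-1})$, one gets $Q_nP_n=\omega_n^{-1}P_nQ_n$, so the paper's convention is the swapped one. As you note, the set $\H_{(n_1,\dots,n_k)}$ is the same either way, so nothing changes.
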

\begin{proof}
First, let $a$ and $b$ be positive integers with $\gcd(a,b)=1$. Then $\Z_{ab}\cong\Z_a\oplus\Z_b$ and by the Chinese remainer theorem (see e.g. \cite{BengZycz}) there is a permutation matrix $R$ such that $R\H_{(n,m)}R^{-1}=\H_{(nm)}$. In this way we can decompose $n_i$ into product of power of primes. Hence we obtain a permutation matrix $R_1$ such that $R_1\H_{(n_1,\dots,n_k)}R_1^{-1}=\H_{(q_1,\dots,q_p)}$, where $q_1,\dots,q_p$ are powers of primes and $\Z_{n_1}\oplus\dots\oplus\Z_{n_k}\cong\Z_{q_1}\oplus\cdots\oplus\Z_{q_p}$.  Since every finite abelian group has (up to an ordering) a unique decomposition  into a direct sum of cyclic groups of prime power order, we obtain similarly a permutation matrix $R_2$ such that $R_2\H_{(m_1,\dots,m_\ell)}R_2^{-1}=\H_{(q_{\sigma(1)},\dots,q_{\sigma(p)})}$ for some permutation $\sigma$ on $p$ elements. Now, there is a permutation matrix that permutes the order of matrices in the Kronecker product (see e.g. \cite{HornJohn}). Hence $R_3\H_{(q_1,\dots,q_p)}R_3^{-1}=\H_{(q_{\sigma(1)},\dots,q_{\sigma(p)})}$ for some permutation matrix $R_3$. Consequently, $U\H_{(n_1,\dots,n_k)}U^{-1}=\H_{(m_1,\dots,m_\ell)}$ for the permutation matrix $U=R_2^{-1}R_3R_1$.

We obtain in this way the following commutative diagram that provides the isomorphism of exact sequences (where the last vertical arrow is uniquely determined by the others):
 \begin{equation*}
\begin{tikzcd}\label{diagram_2}
1 \arrow{r}{}& \H_{(n_1,\dots,n_k)} \arrow{d}{U(\cdot)U^{-1}}  \arrow{r}{\nu} & \CC_{(n_1,\dots,n_k)} \arrow{d}{U(\cdot)U^{-1}} \arrow{r}{\pi}  & \Sp_{[n_1,\dots,n_k]} \arrow{d}{} \arrow{r}{} & 1\\
1 \arrow{r}{} & \H_{(m_1,\dots,m_\ell)} \arrow{r}{\nu'} &\CC_{(m_1,\dots,m_\ell)} \arrow{r}{\pi'}  & \Sp_{[m_1,\dots,m_\ell]}  \arrow{r}{}  & 1
\end{tikzcd}
\end{equation*}
The projective case is done similarly.
\end{proof}

We will also need this simple observation.

\begin{proposition}\label{linear_to_projective}
Let $k\geq1$ and $n_1,\dots,n_k\geq 2$ be positive integers. If the Clifford group $\CC_{(n_1,\dots,n_k)}$ is a natural semidirect product, then the projective Clifford group $\overline{\CC}_{(n_1,\dots,n_k)}$ is a natural semidirect product as well.
\end{proposition}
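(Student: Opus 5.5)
The plan is to compose a splitting of the linear sequence with the natural projection onto projective operators. Assume the natural exact sequence for $\CC_{(n_1,\dots,n_k)}$ is right splitting, so there is a group homomorphism $\vartheta:\Sp_{[n_1,\dots,n_k]}\to\CC_{(n_1,\dots,n_k)}$ with $\pi\circ\vartheta=id_{\Sp_{[n_1,\dots,n_k]}}$. Let $Ad:\CC_{(n_1,\dots,n_k)}\to\overline{\CC}_{(n_1,\dots,n_k)}$, $U\mapsto Ad_U$. This map is a group homomorphism, since $Ad_{UV}=Ad_U\circ Ad_V$, and it is surjective by the definition of $\overline{\CC}_{(n_1,\dots,n_k)}$. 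We then define
$$\overline{\vartheta}=Ad\circ\vartheta:\Sp_{[n_1,\dots,n_k]}\to\overline{\CC}_{(n_1,\dots,n_k)},$$
which is a homomorphism because it is a composition of homomorphisms.

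It remains to verify that $\overline{\pi}\circ\overline{\vartheta}=id$. By definition of the natural sequences we have $\pi(U)=\overline{\pi}(Ad_U)$ for all $U\in\CC_{(n_1,\dots,n_k)}$, that is, $\pi=\overline{\pi}\circ Ad$. Therefore
$$\overline{\pi}\circ\overline{\vartheta}=\overline{\pi}\circ Ad\circ\vartheta=\pi\circ\vartheta=id_{\Sp_{[n_1,\dots,n_k]}}.$$
Hence the projective natural sequence is right splitting. By the recalled group-theoretic fact, $\overline{\CC}_{(n_1,\dots,n_k)}$ is then a natural semidirect product $\overline{\H}_{(n_1,\dots,n_k)}\rtimes\overline{\vartheta}(\Sp_{[n_1,\dots,n_k]})$.

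There is no real obstacle here. The only point needing care is the compatibility identity $\pi=\overline{\pi}\circ Ad$. This holds by the very definition of $\pi$ given with the natural sequences, and that definition relies on Theorem \ref{theorem_1}, whose map $\overline{\pi}$ is well defined on $Ad_U$ independently of the choice of phase of $U$.
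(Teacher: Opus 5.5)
Your proof is correct and is essentially the paper's own argument. The paper also composes the splitting $\vartheta$ with the projection $p\colon U\mapsto Ad_U$, and uses $\pi=\overline{\pi}\circ p$ to conclude that $\overline{\pi}\circ(p\circ\vartheta)=\pi\circ\vartheta=id_{\Sp_{[n_1,\dots,n_k]}}$.
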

\begin{proof}
Consider the commutative diagram of natural short exact sequences
 \begin{equation*}
\begin{tikzcd}
1 \arrow{r}{}& \H_{(n_1,\dots,n_k)} \arrow{r}{} & \CC_{(n_1,\dots,n_k)} \arrow{d}{p} \arrow{r}{\pi}  & \Sp_{[n_1,\dots,n_k]} \arrow[d, equal] \arrow{r}{} & 1\\
1 \arrow{r}{} & \overline{\H}_{(n_1,\dots,n_k)} \arrow{r}{} &\overline{\CC}_{(n_1,\dots,n_k)} \arrow{r}{\overline{\pi}}  & \Sp_{[n_1,\dots,n_k]}  \arrow{r}{} & 1  &
\end{tikzcd}
\end{equation*}
where $p:\CC_{(n_1,\dots,n_k)}\to\overline{\CC}_{(n_1,\dots,n_k)}$, $U\mapsto Ad_U$ is a projection and $\pi=\overline{\pi}\circ p$.

If the Clifford group $\CC_{(n_1,\dots,n_k)}$ is a natural semidirect product then there is a group homomorphism $\vartheta:\Sp_{[n_1,\dots,n_k]}\to \CC_{(n_1,\dots,n_k)}$ such that $\pi\circ\vartheta=id_{\Sp_{[n_1,\dots,n_k]}}$.
 It follows that $\overline{\pi}\circ (p\circ \vartheta)=\pi\circ\vartheta=id_{\Sp_{[n_1,\dots,n_k]}}$ and the lower exact sequence is therefore right splitting with the homomorphism $p\circ\vartheta:\Sp_{[n_1,\dots,n_k]}\to \overline{\CC}_{(n_1,\dots,n_k)}$. Hence the projective Clifford group $\overline{\CC}_{(n_1,\dots,n_k)}$ is a natural semidirect product.
\end{proof}

\begin{lemma}\label{decomp}
Let $A,A'$ be complex regular matrices $n\times n$ and $B,B'$ be complex regular matrices $m\times m$. If $A\otimes B=A'\otimes B'$, then there is $\lambda\in\C\setminus\{0\}$ such that $A'=\lambda A$ and $B'=\lambda^{-1}B$.
\end{lemma}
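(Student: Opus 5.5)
We compare entries directly, using the block structure of the Kronecker product. Write $A=(a_{ij})$, $A'=(a'_{ij})$. Then $A\otimes B$ is the $n\times n$ block matrix whose $(i,j)$ block is $a_{ij}B$, and similarly for $A'\otimes B'$. The hypothesis $A\otimes B=A'\otimes B'$ is therefore equivalent to
\[
a_{ij}B=a'_{ij}B' \qquad\text{for all } i,j\in\{1,\dots,n\}.
\]

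First, $A$ is regular, so it has a nonzero entry, say $a_{i_0j_0}\neq 0$. Since $B$ is regular, $B\neq 0$, and hence the block $a_{i_0j_0}B$ is nonzero. It follows that $a'_{i_0j_0}\neq 0$. Put
\[
\mu=a_{i_0j_0}/a'_{i_0j_0}\in\C\setminus\{0\}.
\]
The block identity at $(i_0,j_0)$ then gives $B'=\mu B$.

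Next, substitute $B'=\mu B$ back into the block identities. This gives $a_{ij}B=a'_{ij}\mu B$ for all $i,j$. Because $B\neq 0$, we may compare coefficients of a nonzero entry of $B$, which yields $a_{ij}=\mu a'_{ij}$ for all $i,j$. Hence $A'=\mu^{-1}A$. Setting $\lambda=\mu^{-1}$, we obtain $A'=\lambda A$ and $B'=\lambda^{-1}B$, as claimed.

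None of these steps is a real obstacle. The one point that needs care is that the scalar relating $B$ and $B'$ must be nonzero and must come from the same nonzero block. This is handled by choosing a nonzero entry of $A$; that entry exists because $A$ is regular. In fact, the argument only needs $A\neq 0$ and $B\neq 0$, so regularity is used only to guarantee these.
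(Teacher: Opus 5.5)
Your proof is correct, but it takes a different route from the paper's.

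The paper uses regularity essentially. It multiplies by $(A\otimes B)^{-1}=A^{-1}\otimes B^{-1}$ and applies the mixed-product property to get $A'A^{-1}\otimes B'B^{-1}=I_{nm}$. It then asserts that both factors of a Kronecker product equal to the identity must be scalar, i.e.\ $A'A^{-1}=\lambda I_n$ and $B'B^{-1}=\lambda^{-1}I_m$. Unpacked, that asserted step is exactly a block comparison like yours, applied to $X\otimes Y=I_{nm}$.

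You instead compare the blocks $a_{ij}B=a'_{ij}B'$ of the original equality directly. You fix the scalar from one nonzero entry of $A$ and then propagate it to all entries.

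What your route gains:
\begin{itemize}
\item it needs no inverses and no mixed-product identity;
\item it spells out the step the paper leaves implicit;
\item as you note, it proves the more general statement for arbitrary nonzero, even rectangular, matrices.
\end{itemize}

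What the paper's route gains: it is shorter on the page, because normalizing to the identity makes the form of the conclusion visible at once. It does need regularity, but that is available where the lemma is used, since the matrices are unitary in Lemma~\ref{lemma_1.3}.
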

\begin{proof}
 Regular matrices are invertible, so from $A\otimes B=A'\otimes B'$ it follows that $I_{nm}=(A'\otimes B')(A\otimes B)^{-1}=(A'\otimes B')(A^{-1}\otimes B^{-1})=(A'A^{-1}\otimes B'B^{-1})$. As $I_{nm}$ is diagonal, it follows that $A'A^{-1}$ and $B'B^{-1}$ have to be diagonal also and, in fact, there is $\lambda\in\C\setminus\{0\}$ such that $A'A^{-1}=\lambda I_n$ and $B'B^{-1}=\lambda^{-1}I_m$. Hence  $A'=\lambda A$ and $B'=\lambda^{-1}B$.
\end{proof}

\begin{lemma}\label{lemma_1.3}
Let $k\geq 2$, $1\leq \ell< k$ and $n_1,\dots,n_k\geq 2$  be positive integers. Let $$\overline{\pi}: \overline{\CC}_{(n_1,\dots,n_k)}\to \Sp_{[n_1,\cdots,n_{k}]}$$
and
$$\overline{\pi'}: \overline{\CC}_{(n_1,\dots,n_\ell)}\to \Sp_{[n_1,\cdots,n_{\ell}]}$$
be projections in the natural exact sequences for the corresponding projective Clifford groups.

Let for  $[\mathbb{H}]\in \Sp_{[n_1,\cdots,n_{\ell}]}$ be $U\in \U(n_1\cdots n_k)$ such that $Ad_{U}\in \overline{\CC}_{(n_1,\dots,n_k)}$ and
$\left[\begin{smat2}{\mathbb{H}}{0}{0}{I_{2(k-\ell)}}\end{smat2}\right]
=\overline{\pi}(Ad_U)\in\Sp_{[n_{1},\dots,n_{k}]}$. Then there are unique, up to a non-zero scalar multiples, matrices $U_1\in \U(n_1\cdots n_{\ell})$ and $U_2\in \U(n_{\ell+1}\cdots n_k)$ such that $Ad_{U_1}\in  \overline{\CC}_{(n_1,\dots,n_\ell)}$, $U=U_1\otimes U_2$ and $[\mathbb{H}]=\overline{\pi'}(Ad_{U_1})\in \Sp_{[n_1,\cdots,n_{\ell}]}$.
\end{lemma}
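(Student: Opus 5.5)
Write $n=n_1\cdots n_\ell$ and $m=n_{\ell+1}\cdots n_k$, so that $\C^{N}=\C^{n}\otimes\C^{m}$. Every element of $\H_{(n_1,\dots,n_k)}$ then has the form $\lambda\, X\otimes Y$ with $X\in\H_{(n_1,\dots,n_\ell)}$ and $Y\in\H_{(n_{\ell+1},\dots,n_k)}$. The plan is to read off from the block shape of $\overline{\pi}(Ad_U)$ how $Ad_U$ acts on the two tensor factors, and then to use the fact that the Pauli products span the full matrix algebra to force $U$ to split.

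First I check that the block-diagonal matrix $\mathrm{diag}(\mathbb{H},I_{2(k-\ell)})$ has the claimed effect. Since it is block diagonal, Theorem \ref{theorem_1} gives
\[
Ad_U(X\otimes Y)\sim X'\otimes Y
\]
for every $X\in\H_{(n_1,\dots,n_\ell)}$ and $Y\in\H_{(n_{\ell+1},\dots,n_k)}$, where $X'$ is the Pauli product whose exponents are obtained by applying $\mathbb{H}$ to the exponents of $X$. Here the well-definedness of the class $[\mathbb{H}]$ modulo $\equiv$ ensures that the exponents are determined only modulo the relevant $n_i$, which is all we need. In particular $U(I_n\otimes Y)U^{-1}=c_Y\,(I_n\otimes Y)$ for some scalar $c_Y$, and $U(X\otimes I_m)U^{-1}\sim X'\otimes I_m$.

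Next I show that $U$ factors. The matrices $I_n\otimes Y$ span $I_n\otimes M_m(\C)$. Take $Y$ to be a generator $P$ or $Q$ of one factor. Then $Ad_U$ multiplies $I_n\otimes Y$ by a scalar $c_Y$, and comparing spectra shows $c_Y$ is a root of unity. Consequently $Ad_U$ restricted to $I_n\otimes\H_{(n_{\ell+1},\dots,n_k)}$ agrees with conjugation by $I_n\otimes W$, where $W$ is a suitable Pauli product in $\H_{(n_{\ell+1},\dots,n_k)}$ realizing these characters. Such a $W$ exists because the characters of the Pauli group modulo scalars are all realized by commutation with Pauli products. Set $V=U(I_n\otimes W)^{-1}$. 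Then $V$ commutes with all of $I_n\otimes M_m(\C)$, so $V$ lies in the commutant, which is $M_n(\C)\otimes I_m$. Hence $V=U_1\otimes I_m$ and $U=U_1\otimes W$. Put $U_2=W$, or a rescaling of it. Unitarity of $U_1$ follows by rescaling, because $U$ unitary and $U=U_1\otimes U_2$ give $U_1U_1^{*}\otimes U_2U_2^{*}=I$, and Lemma \ref{decomp} then forces both products to be positive scalars.

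For the Clifford property and the projection, compute
\[
(U_1XU_1^{-1})\otimes (U_2 I_m U_2^{-1})=Ad_U(X\otimes I_m)\sim X'\otimes I_m .
\]
Lemma \ref{decomp} then gives $U_1XU_1^{-1}\sim X'$. So $Ad_{U_1}$ normalizes $\overline{\H}_{(n_1,\dots,n_\ell)}$, and by the uniqueness in Theorem \ref{theorem_1} we get $\overline{\pi'}(Ad_{U_1})=[\mathbb{H}]$. Uniqueness of the factorization up to scalars is exactly Lemma \ref{decomp}.

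The main obstacle is the middle step: the scalars $c_Y$ mean $Ad_U$ does not act trivially on the second factor, so the commutant argument does not apply directly. If the character-realization argument via $W$ becomes awkward, my fallback is to note that $Ad_U$ maps the algebra $I_n\otimes M_m(\C)$ onto itself, and likewise $M_n(\C)\otimes I_m$, since each is spanned by Pauli products. An automorphism of $M_n\otimes M_m$ preserving both tensor factors is the tensor product of automorphisms of $M_n$ and $M_m$. By Skolem–Noether these are inner, so $U\sim U_1\otimes U_2$ directly.
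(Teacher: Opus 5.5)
Your proof is correct, but it takes a different route from the paper.

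\textbf{The paper's route.} It never analyses the action on the second tensor factor. It uses surjectivity of $\overline{\pi'}$ to pick $V_1$ with $\overline{\pi'}(Ad_{V_1})=[\mathbb{H}]$. It then notes that $\overline{\pi}(Ad_{V_1\otimes I_m})=\overline{\pi}(Ad_U)$. Finally it invokes $\ker\overline{\pi}=\overline{\H}_{(n_1,\dots,n_k)}$ from Theorem \ref{theorem_1} to get $U=\lambda(V_1\otimes I_m)(W_1\otimes W_2)=\lambda\,V_1W_1\otimes W_2$. Three things then come for free: the splitting, the Clifford property of $U_1=V_1W_1$, and $\overline{\pi'}(Ad_{U_1})=[\mathbb{H}]$, since $Ad_{W_1}\in\ker\overline{\pi'}$.

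\textbf{Your route.} You build the Pauli correction $W$ on the second factor directly from the phases $c_Y$. You then apply the commutant identity $(I_n\otimes M_m(\C))'=M_n(\C)\otimes I_m$. In effect you re-prove, locally, the part of Theorem \ref{theorem_1} saying that a Clifford element acting trivially on exponents is a Pauli operator. You then need the uniqueness clause of Theorem \ref{theorem_1} to identify $\overline{\pi'}(Ad_{U_1})$, whereas the paper gets this by construction.

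\textbf{Comparison.} The paper's argument is shorter because it leans entirely on the structure theorem. Yours is more self-contained: it uses neither surjectivity of $\overline{\pi'}$ nor the kernel description. It also makes transparent that the factorization is forced by the block-diagonal shape of $\overline{\pi}(Ad_U)$ alone. Your Skolem--Noether fallback is also valid.

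\textbf{Two small points.}
\begin{enumerate}
\item The spectral argument actually gives that $c_Y$ is an $n_r$-th root of unity, for $Y$ a generator on factor $r$; this also follows at once from $(I_n\otimes Y)^{n_r}=I$. That exact order is what guarantees $W=\bigotimes_r P_{n_r}^{a_r}Q_{n_r}^{b_r}$ exists.
\item $V=U(I_n\otimes W)^{-1}=U_1\otimes I_m$ is unitary, so $U_1$ is unitary outright and no rescaling is needed.
\end{enumerate}
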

\begin{proof}
Assume that that there are such $[\mathbb{H}]\in \Sp_{[n_1,\cdots,n_{\ell}]}$ and $U\in \U(n_1\cdots n_k)$. Since $\overline{\pi'}$ is an epimorphism, there is  $V_1\in \U(n_1\cdots n_\ell)$ such that $Ad_{V_1}\in \overline{\CC}_{(n_1,\dots,n_\ell)}$  and  $\overline{\pi'}(Ad_{U_1})=[\mathbb{H}]$. By the acting of $Ad_{V_1}$ on $\overline{\H}_{(n_1,\dots,n_\ell)}$, as inner automorphism, it immediately follows, by Theorem \ref{theorem_1}, that  $$\overline{\pi}\left(Ad_{V_1\otimes I_{n_{\ell+1}\cdots n_k}}\right)=\left[\begin{smat2}{\mathbb{H}}{0}{0}{I_{2(k-\ell)}}\end{smat2}\right]
\in\Sp_{[n_{1},\dots,n_{k}]}.$$

Now, as $\overline{\pi}(Ad_U)=\overline{\pi}(Ad_{V_1\otimes I_{n_{\ell+1}\cdots n_k}})$, the elements $Ad_U$ and $Ad_{V_1\otimes I_{n_{\ell+1}\cdots n_k}}$ differ only up to an element from $\ker(\overline{\pi})=\overline{\H}_{(n_1,\dots,n_k)}$. In particular, there are $i_1,\dots,i_n,j_1,\dots,j_n\in\Z$ and $W_1=P_{n_1}^{i_1}Q_{n_1}^{j_1}\otimes\cdots\otimes P_{n_\ell}^{i_\ell}Q_{n_\ell}^{j_\ell}$ and $W_2=P_{n_{\ell+1}}^{i_{\ell+1}}Q_{n_{\ell+1}}^{j_{\ell+1}}\otimes\cdots\otimes P_{n_k}^{i_k}Q_{n_k}^{j_k}$ such that $Ad_{U}=Ad_{V_1\otimes I_{n_{\ell+1}\cdots n_k}}\cdot Ad_{W_1\otimes W_2}$. Hence, there is $\lambda\in\C\setminus\{0\}$ such that $$U=\lambda(V_1\otimes I_{n_{\ell+1}\cdots n_k})\cdot (W_1\otimes W_2)=\lambda(V_1W_1\otimes W_2).$$ As both $V_1W_1$ and $W_2$ are unitary matrices, so is $V_1W_1\otimes W_2$. And since $U$ is also a unitary matrix, it follows that $|\lambda|=1$ and the matrices $U_1=V_1W_1$ and $U_2=\lambda W_2$ are therefore unitary as well. We have thus $U=U_1\otimes U_2$ and, by Lemma \ref{decomp}, the matrices $U_1$ and $U_2$ are unique up a non-zero scalar multiple. Finally, $Ad_{U_1}=Ad_{V_1}Ad_{W_1}\in \overline{\CC}_{(n_1,\dots,n_{\ell})}$.
\end{proof}

The next lemma is a key argument how to reduce the problem of the non-existence of the natural semidirect structure of Clifford group.

\begin{proposition}\label{short}
Let $k\geq 2$, $1\leq \ell< k$ and $n_1,\dots,n_k\geq 2$  be positive integers.
If the projective Clifford group $\overline{\CC}_{(n_1,\dots,n_k)}$ is a natural semidirect product, then the projective Clifford group $\overline{\CC}_{(n_1,\dots,n_\ell)}$ is also a natural semidirect product.
\end{proposition}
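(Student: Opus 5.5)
Suppose $\vartheta:\Sp_{[n_1,\dots,n_k]}\to\overline{\CC}_{(n_1,\dots,n_k)}$ is a homomorphism with $\overline{\pi}\circ\vartheta=id$. I will build a splitting $\vartheta'$ for the smaller system by restricting $\vartheta$ to the block-diagonal copy of $\Sp_{[n_1,\dots,n_\ell]}$ and then extracting the first tensor factor via Lemma \ref{lemma_1.3}.

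First I check that the embedding $\iota:\Sp_{[n_1,\dots,n_\ell]}\to\Sp_{[n_1,\dots,n_k]}$, $[\mathbb{H}]\mapsto\left[\begin{smat2}{\mathbb{H}}{0}{0}{I_{2(k-\ell)}}\end{smat2}\right]$, is a well-defined injective group homomorphism. A block matrix with these shapes lies in $\mathcal{S}_{[n_1,\dots,n_k]}$, since the diagonal identity blocks have the form $\tfrac{n_i}{\gcd(n_i,n_i)}I_2$ and the off-diagonal zero blocks are admissible. The symplectic condition holds blockwise. For compatibility with $\equiv$, note that the congruence uses $\lcm(n_1,\dots,n_k)$ together with the factors $\tfrac{\lcm}{n_i}$. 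Hence $\mathbb{D}\mathbb{H}\equiv\mathbb{D}\mathbb{G}$ modulo $\lcm(n_1,\dots,n_k)$ on the first $\ell$ block rows amounts to the condition that row block $i$ agrees modulo $n_i$, and the same holds for $\ell$ blocks. This routine check is the one place where care with the definition is needed. Alternatively, I can avoid it entirely by noting that $\iota([\mathbb{H}])=\overline{\pi}(Ad_{V_1\otimes I})$ for any $Ad_{V_1}$ in the preimage of $[\mathbb{H}]$, exactly as in the proof of Lemma \ref{lemma_1.3}. This shows $\iota$ is a well-defined homomorphism, because $V_1\mapsto V_1\otimes I$ is multiplicative and $\overline{\pi'}$ is surjective with kernel $\overline{\H}_{(n_1,\dots,n_\ell)}$, which maps into $\overline{\H}_{(n_1,\dots,n_k)}$.

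Next, for $[\mathbb{H}]\in\Sp_{[n_1,\dots,n_\ell]}$, write $\vartheta(\iota[\mathbb{H}])=Ad_U$. Lemma \ref{lemma_1.3} gives $U=U_1\otimes U_2$ with $U_1$ unique up to a scalar, $Ad_{U_1}\in\overline{\CC}_{(n_1,\dots,n_\ell)}$ and $\overline{\pi'}(Ad_{U_1})=[\mathbb{H}]$. The scalar ambiguity does not affect $Ad_{U_1}$, and $U$ itself is determined only up to a scalar, which can be absorbed into $U_2$. So I define $\vartheta'([\mathbb{H}])=Ad_{U_1}$, and this is well defined. By construction, $\overline{\pi'}\circ\vartheta'=id$.

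Finally I must show that $\vartheta'$ is a homomorphism, which is the main point. Take $[\mathbb{H}],[\mathbb{G}]$ with lifts $U=U_1\otimes U_2$ and $V=V_1\otimes V_2$. Since $\vartheta\circ\iota$ is a homomorphism, $Ad_{UV}$ is the image of $[\mathbb{H}\mathbb{G}]$. Now $UV=U_1V_1\otimes U_2V_2$, and this is a factorization of the kind described in Lemma \ref{lemma_1.3}. By the uniqueness there (ultimately Lemma \ref{decomp}), the first factor attached to $[\mathbb{H}\mathbb{G}]$ is $U_1V_1$ up to a scalar. Therefore $\vartheta'([\mathbb{H}\mathbb{G}])=Ad_{U_1V_1}=\vartheta'([\mathbb{H}])\vartheta'([\mathbb{G}])$. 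It follows that the natural sequence for $\overline{\CC}_{(n_1,\dots,n_\ell)}$ splits on the right, so $\overline{\CC}_{(n_1,\dots,n_\ell)}$ is a natural semidirect product.
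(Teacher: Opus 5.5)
Your proof is correct and follows the paper's argument. Both embed $\Sp_{[n_1,\dots,n_\ell]}$ block-diagonally, factor a representative of the image as $U_1\otimes U_2$ via Lemma \ref{lemma_1.3}, and use the uniqueness from Lemma \ref{decomp} to get well-definedness and multiplicativity. Your explicit choice $Ad_U=\vartheta(\iota[\mathbb{H}])$ is exactly where the splitting hypothesis enters. The paper's text only picks some preimage under $\overline{\pi}$, which by itself would make $\vartheta'$ neither well defined nor multiplicative. Your check that $\iota$ respects $\equiv$ also fills in the paper's ``clearly''.
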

\begin{proof}
Consider the natural exact sequences for the corresponding projective Clifford groups:
\begin{equation*} \label{exact4}
     1 \rightarrow \overline{\H}_{(n_1,\dots,n_k)} \stackrel{\overline{\nu}}{\longrightarrow} \overline{\CC}_{(n_1,\dots,n_k)}
\stackrel{\overline{\pi}}{\longrightarrow} \Sp_{[n_1,\dots,n_{k}]}  \rightarrow 1 .
\end{equation*}
\begin{equation*} \label{exact5}
     1 \rightarrow \overline{\H}_{(n_1,\dots,n_\ell)} \stackrel{\overline{\nu'}}{\longrightarrow} \overline{\CC}_{(n_1,\dots,n_\ell)}
\stackrel{\overline{\pi'}}{\longrightarrow} \Sp_{[n_1,\dots,n_{\ell}]}  \rightarrow 1 .
\end{equation*}

Assume that there is a right splitting homomorphisms $\vartheta:\Sp_{[n_1,\dots,n_{k}]} $ $\to  \overline{\CC}_{(n_1,\dots,n_k)}$ such that $\overline{\pi}\circ\vartheta=id_{\Sp_{[n_1,\dots,n_{k}]}}$. We construct a splitting homomorphism $\vartheta':\Sp_{[n_1,\dots,n_{\ell}]} \to  \overline{\CC}_{(n_1,\dots,n_\ell)}$ such that $\overline{\pi'}\circ\vartheta'=id_{\Sp_{[n_1,\dots,n_{\ell}]}}$.

According the Definition \ref{def_11}, the map $\psi:\Sp_{[n_1,\dots,n_{\ell}]}\to \Sp_{[n_1,\dots,n_{k}]}$, $\psi\left([\mathbb{H}]\right)=\left[\begin{smat2}{\mathbb{H}}{0}{0}{I_{2(k-\ell)}}\end{smat2}\right]$ is clearly a group monomorphisms. As $\overline{\pi}$ is an epimorphism, for given $[\mathbb{H}]\in \Sp_{[n_1,\dots,n_{\ell}]}$ there is $U\in \U(n_1\cdots n_k)$ such that $\overline{\pi}(Ad_U)=\psi\left([\mathbb{H}]\right)$.
By Lemma \ref{lemma_1.3}, there are, up to a non-zero scalar multiples, matrices $U_1\in \U(n_1\cdots n_{\ell})$ and $U_2\in \U(n_{\ell+1}\cdots n_k)$ such that $Ad_{U_1}\in  \overline{\CC}_{(n_1,\dots,n_{\ell})}$, $U=U_1\otimes U_2$ and $[\mathbb{H}]=\overline{\pi'}(Ad_{U_1})$.
Now we set $\vartheta'\left([\mathbb{H}]\right)=Ad_{U_1}$. As $U_1$ is unique, up to a scalar multiple, the map $\vartheta'$ is well defined. Moreover, $\overline{\pi'}(\vartheta'([\mathbb{H}]))=\overline{\pi'}(Ad_{U_1})=[\mathbb{H}]$ and, therefore, $\overline{\pi'}\circ\vartheta'=id_{\Sp_{[n_1,\cdots,n_{\ell}]}}$.

Finally, similarly as above, for an arbitrary $[\mathbb{G}]\in \Sp_{[n_1,\cdots,n_{\ell}]}$, there is $V\in \U(n_1\cdots n_k)$ such that $\overline{\pi}(Ad_{V})=\psi\left([\mathbb{G}]\right)$ and matrices $V_1\in \U(n_1\cdots n_{\ell})$ and $V_2\in \U(n_{\ell+1}\cdots n_k)$ such that $Ad_{V_1}\in  \overline{\CC}_{(n_1,\dots,n_{\ell})}$, $V=V_1\otimes V_2$ and $[\mathbb{G}]=\overline{\pi'}(Ad_{V_1})$. For the matrix $UV\in \U(n_1\cdots n_k)$ there holds that $\overline{\pi}(Ad_{UV})=\overline{\pi}(Ad_{U})\overline{\pi}(Ad_{V})=\psi([\mathbb{H}])\psi([\mathbb{G}])=\psi([\mathbb{H}]\cdot[\mathbb{G}])$,  $UV=(U_1V_1)\otimes(U_2V_2)$ and $[\mathbb{H}]\cdot[\mathbb{G}]=\overline{\pi'}(Ad_{U_1})\overline{\pi'}(Ad_{V_1})=\overline{\pi'}(Ad_{U_1V_1})$. By the construction of the map $\vartheta'$, we obtain that $\vartheta'\left([\mathbb{H}]\cdot[\mathbb{G}]\right)=Ad_{U_1V_1}$. Therefore $\vartheta'\left([\mathbb{H}]\cdot[\mathbb{G}]\right)=Ad_{U_1V_1}=Ad_{U_1}Ad_{V_1}=\vartheta'([\mathbb{H}])\cdot\vartheta'([\mathbb{G}])$ and the map $\vartheta'$ is thus a group homomorphism.

This proves that the exact sequences for the group $\overline{\CC}_{(n_1,\dots,n_\ell)}$  is right splitting, hence the projective Clifford group $\overline{\CC}_{(n_1,\dots,n_\ell)}$ is a natural semidirect product.
\end{proof}

In the next section we show the \emph{non-existence} of the natural semidirect product of pojective Clifford group for certain systems composed of \emph{two} subsystems.

\section{Systems composed of two subsystems}

We are going to show that for  $n = 2\ (\mathrm{mod}\ 4)$ and $m = 2\ (\mathrm{mod}\ 4)$ the natural semidirect product structure of  $\overline{\CC}_{(n,m)}$
is prohibited (see Proposition \ref{composite_system_semidirect}), i.e., that the exact sequence

\begin{equation} \label{exact6}
     1 \rightarrow \overline{\H}_{(n,m)} \stackrel{\overline{\nu}}{\longrightarrow} \overline{\CC}_{(n,m)}
\stackrel{\overline{\pi}}{\longrightarrow} \Sp_{[n,m]}  \rightarrow 1 .
\end{equation}
is \emph{not} right splitting.

Such two-component composite system was
considered thoroughly in \cite{KorbTolar10}.

\

Let us recall that, by Theorem \ref{theorem_1}, for $U\in \CC_{(n,m)}$ and $\overline{\pi}(Ad_U)=[\mathbb{H}]\in \Sp_{[n,m]}$ and  for all $i_1,j_1,i_2,j_2\in\Z$ there holds
\begin{equation}\label{acting}
 Ad_U\left(P_{n}^{i_1}Q_{n}^{j_1}\otimes P_{m}^{i_2}Q_{m}^{j_2}\right)\sim P_{n}^{i'_1}Q_{n}^{j'_1}\otimes P_{m}^{i'_2}Q_{m}^{j'_2}
\end{equation}
where $\begin{pmatrix}i'_1\\[1mm] j'_1\\[1mm]i'_2\\[1mm]j'_2 \end{pmatrix}=\mathbb{H}\begin{pmatrix}i_1\\ j_1\\i_2\\j_2 \end{pmatrix}$.

\

Now, we will assume that for the exact sequence (\ref{exact6}) there exists a right splitting homomorphism
\begin{equation}\label{assumption}
\T:  \Sp_{[n,m]} \to \overline{\CC}_{(n,m)}
\end{equation}
such that $\overline{\pi}\circ\T=id_{\Sp_{[n,m]}}$, and we are going to obtain in this way a contradiction (see the proof of Proposition \ref{composite_system_semidirect}).

\

 For a more concise notation we will rather write $\T_{[\mathbb{H}]}$  than $\T([\mathbb{H}])$ for $[\mathbb{H}]\in \Sp_{[n,m]}$.

\

Now for $[\mathbb{H}]\in \Sp_{[n,m]}$ we can express the relation \ref{acting} in a simpler way:

For every $i_1,j_1,i_2,j_2\in\Z$ there holds
\begin{equation}\label{acting_2}
 \T_{[\mathbb{H}]}\left(P_{n}^{i_1}Q_{n}^{j_1}\otimes P_{m}^{i_2}Q_{m}^{j_2}\right)\sim P_{n}^{i'_1}Q_{n}^{j'_1}\otimes P_{m}^{i'_2}Q_{m}^{j'_2}
\end{equation}
where $\begin{pmatrix}i'_1\\[1mm] j'_1\\[1mm]i'_2\\[1mm]j'_2 \end{pmatrix}=\mathbb{H}\begin{pmatrix}i_1\\ j_1\\i_2\\j_2 \end{pmatrix}$.

Moreover, the elements $\T_{[\mathbb{H}]}\in \overline{\CC}_{(n,m)}$  are automorphism of $\H_{(n,m)}$ that are identical on the  center $Z(\H_{(n,m)})$ of $\H_{(n,m)}$, i.e. on the set $$Z(\H_{(n,m)})=\set{\lambda\cdot I_{nm}}{\lambda\in\U(nm)}.$$

\

In \cite{KorbTolar10} it was shown that
the symmetry group $\mathrm{Sp}_{[n,m]}$ is generated by five elements
$$ \alpha=\begin{matt2}{T_2}{0}{0}{I_2}\end{matt2},\ \  \beta=\begin{matt2}{J_2}{0}{0}{I_2}\end{matt2},
 \ \  \alpha'=\begin{matt2}{I_2}{0}{0}{T_2}\end{matt2},\ \ \beta'=\begin{matt2}{I_2}{0}{0}{J_2}\end{matt2}$$
$$  \text{and} \ \ \gamma=\begin{matt2}{I_2}{aE_{12}}{bE_{12}}{I_2}\end{matt2}$$
where  $a=\frac{n}{\gcd(n,m)}$, $b=\frac{m}{\gcd(n,m)}$ and $$I_2=\begin{mat2}{1}{0}{0}{1}\end{mat2}, \ \  T_2=\begin{mat2}{1}{1}{0}{1}\end{mat2}, \ \
J_2=\begin{mat2}{0}{-1}{1}{0}\end{mat2}, \ \ E_{12}=\begin{mat2}{0}{1}{0}{0}\end{mat2}.$$

 \begin{lemma}\label{first_form}
 Considering the relation (\ref{acting_2}), the automorphisms assigned to $\alpha,\beta,\alpha',\beta'$ and $\gamma$ via $\T$  take the form
\begin{displaymath}
\begin{array}{ll}
\begin{array}{ll}
\T_{\alpha}:
  & P_n\otimes I_m\  \mapsto\  \lambda_1(P_n\otimes I_m)\\
  & Q_n\otimes I_m\  \mapsto\  \lambda_2(P_nQ_n\otimes I_m)\\
  & I_n\otimes P_m\  \mapsto\  \lambda_3(I_n\otimes P_m) \\
  & I_n\otimes Q_m\  \mapsto\  \lambda_4(I_n\otimes Q_m)
\end{array}
&
\begin{array}{ll}
\T_{\alpha'}:
  & P_n\otimes I_m\  \mapsto\  \lambda'_1(P_n\otimes I_m)\\
  & Q_n\otimes I_m\  \mapsto\  \lambda'_2(Q_n\otimes I_m)\\
  & I_n\otimes P_m\  \mapsto\  \lambda'_3(I_n\otimes P_m) \\
  & I_n\otimes Q_m\  \mapsto\  \lambda'_4(I_n\otimes P_mQ_m)
\end{array}
\end{array}
\end{displaymath}

\begin{displaymath}
\begin{array}{ll}
\begin{array}{ll}
\T_{\beta}:
  & P_n\otimes I_m\  \mapsto\  \mu_1(Q^{-1}_n\otimes I_m)\\
  & Q_n\otimes I_m\  \mapsto\  \mu_2(P_n\otimes I_m)\\
  & I_n\otimes P_m\  \mapsto\  \mu_3(I_n\otimes P_m) \\
  & I_n\otimes Q_m\  \mapsto\  \mu_4(I_n\otimes Q_m)
\end{array}
&
\begin{array}{ll}
\T_{\beta'}:
  & P_n\otimes I_m\  \mapsto\  \mu'_1(P_n\otimes I_m)\\
  & Q_n\otimes I_m\  \mapsto\  \mu'_2(Q_n\otimes I_m)\\
  & I_n\otimes P_m\  \mapsto\  \mu'_3(I_n\otimes Q^{-1}_m) \\
  & I_n\otimes Q_m\  \mapsto\  \mu'_4(I_n\otimes P_m)
\end{array}
\end{array}
\end{displaymath}

\begin{displaymath}
\begin{array}{ll}
 \T_{\gamma}:
  & P_n\otimes I_m\  \mapsto\  \nu_1(P_n\otimes I_m)\\
  & Q_n\otimes I_m\  \mapsto\  \nu_2(Q_n\otimes P^b_m)\\
  & I_n\otimes P_m\  \mapsto\  \nu_3(I_n\otimes P_m) \\
  & I_n\otimes Q_m\  \mapsto\  \nu_4(P^a_n\otimes Q_m)
\end{array}
\end{displaymath}
for some $\lambda_i,\lambda'_i, \mu_i,\mu'_i, \nu_i\in U(1)$, $i=1,2,3,4$.
\end{lemma}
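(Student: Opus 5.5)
The plan is to read off the images of the four generators $P_n\otimes I_m$, $Q_n\otimes I_m$, $I_n\otimes P_m$, $I_n\otimes Q_m$ directly from relation (\ref{acting_2}). Each generator corresponds to a standard basis vector of exponents $(i_1,j_1,i_2,j_2)^T$, namely $e_1,e_2,e_3,e_4$. Relation (\ref{acting_2}) then says that $\T_{[\mathbb{H}]}$ maps the generator, up to a scalar, to the Kronecker product of Pauli matrices whose exponent vector is $\mathbb{H}e_i$, i.e. the $i$-th column of $\mathbb{H}$. Since $\T_{[\mathbb{H}]}=Ad_U$ for some unitary $U$ and the generators are unitary, the scalar has modulus one. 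This gives $\lambda_i,\mu_i,\dots\in\U(1)$.

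The columns are computed for each generator in turn. For $\alpha=\diag(T_2,I_2)$ the columns are $(1,0,0,0)^T$, $(1,1,0,0)^T$, $e_3$ and $e_4$. These give $P_n\otimes I_m$, $P_nQ_n\otimes I_m$, $I_n\otimes P_m$ and $I_n\otimes Q_m$. The case $\alpha'$ is symmetric.

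For $\beta=\diag(J_2,I_2)$ with $J_2=\left(\begin{smallmatrix}0&-1\\1&0\end{smallmatrix}\right)$ the first two columns are $(0,1,0,0)^T$ and $(-1,0,0,0)^T$. These give $Q_n\otimes I_m$ and $P_n^{-1}\otimes I_m$, which appears to swap the lemma's listed images. I would check the convention here, since the lemma lists $P\mapsto Q^{-1}$ and $Q\mapsto P$, i.e. the rows of $J_2$ rather than its columns. The action on exponents must be interpreted consistently with Theorem \ref{theorem_1}. One should also use that the class $[\mathbb{H}]$ is only determined up to the congruence $\equiv$, and that $J_2$ and $J_2^{-1}$ play symmetric roles. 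Concretely, I will verify that the assignment given is the one compatible with the exponent convention $P^iQ^j$ and the commutation relation $Q_nP_n=\omega_nP_nQ_n$. If the sign differs, it can be absorbed by adjusting which matrix the paper calls $J_2$ in the generator list; the displayed matrix there is the transpose of the one in Definition \ref{def_11}. Either way, the image is a scalar multiple of the stated Pauli monomial. The case $\beta'$ is analogous.

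For $\gamma$, the $4\times4$ matrix has rows $(1,0,0,a)$, $(0,1,0,0)$, $(0,b,1,0)$, $(0,0,0,1)$. Its columns are $e_1$, $(0,1,b,0)^T$, $e_3$ and $(a,0,0,1)^T$. These give $P_n\otimes I_m$, $Q_n\otimes P_m^b$, $I_n\otimes P_m$ and $P_n^a\otimes Q_m$, exactly as stated.

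The only step I regard as delicate is this bookkeeping of conventions for $\beta,\beta'$, namely the exponent ordering and the choice between $J_2$ and its transpose. Everything else is immediate from (\ref{acting_2}) together with unitarity.
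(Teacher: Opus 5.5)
Your approach is the paper's. The paper states this lemma without proof, as a direct read-off from (\ref{acting_2}): the generator with exponent vector $e_i$ is sent, up to a unimodular scalar, to the Pauli monomial whose exponent vector is the $i$-th column of $\mathbb{H}$. Unitarity gives the modulus-one scalars, as you say.

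Your column computations for $\alpha$, $\alpha'$ and $\gamma$ are correct. You are also right about $\beta$. With the generators' $J_2=\begin{pmatrix}0&-1\\1&0\end{pmatrix}$ the columns give $P_n\mapsto Q_n$ and $Q_n\mapsto P_n^{-1}$. The lemma's formulas for $\beta,\beta'$ instead come from the columns of $J_2=\begin{pmatrix}0&1\\-1&0\end{pmatrix}$ in Definition \ref{def_11}. This is a notational inconsistency in the paper.

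Three of the justifications you offer for getting around this are wrong and should be dropped.
\begin{enumerate}
\item The congruence $\equiv$ does not identify the two matrices. They differ by $2J_2$, which is nonzero modulo $n$ unless $n=2$.
\item The relation $Q_nP_n=\omega_nP_nQ_n$ cannot decide between them. Both have determinant $1$, so both preserve the symplectic form that governs commutation of the $P_n^iQ_n^j$.
\item The claim that ``either way'' the image is a scalar multiple of the stated monomial is false. For $n>2$, $Q_n$ is not proportional to $Q_n^{-1}$, because $Q_n^2P_n=\omega_n^2P_nQ_n^2$ shows $Q_n^2$ is not scalar. So with the displayed $J_2$, the lemma's formulas for $\Omega_\beta$ fail literally; this matters here because $n=6,10,\dots$ are allowed.
\end{enumerate}

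The correct resolution is the one you mention only in passing, and you should commit to it. Take $\beta=\diag(J_2,I_2)$ and $\beta'=\diag(I_2,J_2)$ with $J_2$ from Definition \ref{def_11}; equivalently, replace the displayed generators by their inverses. This is harmless for everything downstream. The non-splitting argument uses only the relations of Proposition \ref{basic relations}, and each remains valid under $\beta\mapsto\beta^{-1}$, $\beta'\mapsto\beta'^{-1}$. Commutation relations survive inversion, the fourth powers are still trivial, and items (iii)--(iv) involve only $\beta^2=\beta^{-2}$ and $\beta'^2=\beta'^{-2}$.
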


\

Now, as the starting point, the following algebraic relations between the generating elements of $\Sp_{[n,m]}$
will be proved.

\begin{proposition}\label{basic relations}
Set $id=\begin{matt2}{I_2}{0}{0}{I_2}\end{matt2}\in \Sp_{[n,m]}$. The following holds:
 \begin{enumerate}
  \item[(i)] $\alpha\alpha'=\alpha'\alpha$, $\beta\beta'=\beta'\beta$, $\alpha\beta'=\beta'\alpha$, $\beta\alpha'=\alpha'\beta$.
  \item[(ii)] $\gamma^{\gcd(n,m)}=id$, $\alpha^n=id$, $(\alpha')^m=id$, $\beta^4=id=(\beta')^4$.
  \item[(iii)] $\alpha\beta^2=\beta^2\alpha$, $\alpha'\beta'^2=\beta'^2\alpha'$.
  \item[(iv)] $(\alpha\alpha')\gamma=\gamma(\alpha\alpha')$, $(\beta^2\gamma)^2=id=(\gamma\beta^2)^2$, $(\beta'^2\gamma)^2=id=(\gamma\beta'^2)^2$.
 \end{enumerate}
\end{proposition}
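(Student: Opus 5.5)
Every relation in (i)–(iv) is an identity between explicit $4\times 4$ integer matrices built from $2\times 2$ blocks. So the plan is to multiply the matrices blockwise and then compare the result with $id$ modulo the congruence $\equiv$ of Definition~\ref{def_11}. For $k=2$ that congruence says the following: $\mathbb{H}\equiv\mathbb{G}$ iff $\mathbb{D}\mathbb{H}=\mathbb{D}\mathbb{G}\ (\mathrm{mod}\ \lcm(n,m))$, where $\mathbb{D}=\diag(\tfrac{\lcm(n,m)}{n}I_2,\tfrac{\lcm(n,m)}{m}I_2)$. Concretely, the first block row is compared modulo $n$ and the second block row modulo $m$. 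Each identity therefore reduces to a blockwise check: the first block row modulo $n$, the second modulo $m$.

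\textbf{Part (i).} The matrices $\alpha,\beta$ are block diagonal with nontrivial block in position $(1,1)$ only, and $\alpha',\beta'$ are block diagonal with nontrivial block in position $(2,2)$ only. Block diagonal matrices acting on different blocks commute, so (i) holds already in $\mathcal{S}_{[n,m]}$.

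\textbf{Part (ii).} We have $\alpha^n=\diag(T_2^n,I_2)$ with $T_2^n=\begin{smat2}{1}{n}{0}{1}\end{smat2}\equiv I_2 \pmod n$. Similarly $(\alpha')^m\equiv id$ from $T_2^m\equiv I_2\pmod m$, and $J_2^4=I_2$ gives $\beta^4=id=(\beta')^4$.

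For $\gamma$, set $N=\begin{smat2}{0}{aE_{12}}{bE_{12}}{0}\end{smat2}$, so that $\gamma=id+N$. Since $E_{12}^2=0$, we get $N^2=\diag(abE_{12}^2,\,baE_{12}^2)=0$. The matrices $id$ and $N$ commute, so $\gamma^{d}=id+dN$, where $d=\gcd(n,m)$. The off-diagonal blocks of $\gamma^d$ are then $daE_{12}=nE_{12}$ in the first block row and $dbE_{12}=mE_{12}$ in the second. These vanish modulo $n$ and modulo $m$ respectively, so $\gamma^{d}\equiv id$.

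\textbf{Part (iii).} Since $\beta^2=\diag(-I_2,I_2)$, this matrix is central among block diagonal matrices, which gives (iii).

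\textbf{Part (iv).} First, $\alpha\alpha'=\diag(T_2,T_2)$. Commutation with $\gamma$ then reduces to checking $T_2E_{12}=E_{12}T_2$, which is true because both sides equal $E_{12}$.

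Next, $\beta^2\gamma=\begin{smat2}{-I_2}{-aE_{12}}{bE_{12}}{I_2}\end{smat2}$. Squaring it, the diagonal blocks are
$I_2-abE_{12}^2=I_2$ and $I_2-baE_{12}^2=I_2$. The off-diagonal blocks are $aE_{12}-aE_{12}=0$ and $-bE_{12}+bE_{12}=0$. Hence $(\beta^2\gamma)^2=id$ exactly.

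The remaining relations follow without further computation. The identity $(\gamma\beta^2)^2=\beta^{-2}(\beta^2\gamma)^2\beta^2$ gives $(\gamma\beta^2)^2=id$. The computation for $\beta'^2=\diag(I_2,-I_2)$ is symmetric to the one above.

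\textbf{Main obstacle.} There is no real obstacle here. The only points needing care are the relation $\gamma^{\gcd(n,m)}=id$, which requires interpreting $\equiv$ correctly as row-block-wise reduction modulo $n$ and $m$, and checking that all the matrices lie in $\mathcal{S}_{[n,m]}$, which they visibly do.
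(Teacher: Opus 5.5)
Your proposal is correct and follows the paper's proof: direct blockwise multiplication, using $\beta^2=\diag(-I_2,I_2)$ for (iii), and $T_2E_{12}=E_{12}=E_{12}T_2$ together with $E_{12}^2=0$ for (iv). For (ii), which the paper dismisses as obvious, you add a correct detail: you read $\equiv$ as reducing the first block row mod $n$ and the second mod $m$, and then write $\gamma=id+N$ with $N^2=0$ to get $\gamma^{\gcd(n,m)}\equiv id$.
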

\begin{proof}
(i), (ii) Obvious.

(iii) Follows easily from the equalities $\beta^2=\begin{matt2}{-I_2}{0}{0}{I_2}\end{matt2}$ and $\beta'^2=\begin{matt2}{I_2}{0}{0}{-I_2}\end{matt2}$.

(iv) We have $\alpha\alpha'=\begin{matt2}{T_2}{0}{0}{T_2}\end{matt2}$.
Hence $$(\alpha\alpha')\gamma=\begin{matt2}{T_2}{0}{0}{T_2}\end{matt2}\begin{matt2}{I_2}{aE_{12}}{bE_{12}}{I_2}\end{matt2}=\begin{matt2}{T_2}{aT_2E_{12}}{bT_2E_{12}}{T_2}\end{matt2}=\begin{matt2}{T_2}{aE_{12}}{bE_{12}}{T_2}\end{matt2}$$
$$\gamma(\alpha\alpha')=\begin{matt2}{I_2}{aE_{12}}{bE_{12}}{I_2}\end{matt2}\begin{matt2}{T_2}{0}{0}{T_2}\end{matt2}=\begin{matt2}{T_2}{aE_{12}T_2}{bE_{12}T_2}{T_2}\end{matt2}=\begin{matt2}{T_2}{aE_{12}}{bE_{12}}{T_2}\end{matt2}$$ since $T_2E_{12}=E_{12}=E_{12}T_2$ and we obtain $(\alpha\alpha')\gamma=\gamma(\alpha\alpha')$.

Further, $\beta^2\gamma=\begin{matt2}{-I_2}{0}{0}{I_2}\end{matt2}\begin{matt2}{I_2}{aE_{12}}{bE_{12}}{I_2}\end{matt2}=\begin{matt2}{-I_2}{-aE_{12}}{bE_{12}}{I_2}\end{matt2}$ and $$(\beta^2\gamma)^2=\begin{matt2}{-I}{-aE_{12}}{bE_{12}}{I}\end{matt2}\begin{matt2}{-I_2}{-aE_{12}}{bE_{12}}{I_2}\end{matt2}=\begin{matt2}{I_2}{0}{0}{I_2}\end{matt2}$$ since $E_{12}E_{12}=0$. The rest is similar.
\end{proof}

\bigskip

\subsection{Preliminary calculations}

In this part, by using the relation from Proposition \ref{basic relations}, we obtain conditions that fulfill the scalars  $\lambda_i,\lambda'_i, \mu_i,\mu'_i, \nu_i\in U(1)$, $i=1,2,3,4$.

\begin{lemma}\label{basic}
The following holds:
\begin{enumerate}
 \item[(i)] $\lambda'_1=1$, $\lambda_3=1$.
  \item[(ii)] $\mu'_1=1$, $\lambda_4=1$.
 \item[(iii)] $\mu_3=1$, $\lambda'_2=1$.
 \item[(iv)] $\mu'_2= 1$, $\mu_4= 1$.

\end{enumerate}
\end{lemma}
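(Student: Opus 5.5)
The plan is to use the commutation relations (i) of Proposition \ref{basic relations}. Since $\T$ is a homomorphism, each such relation becomes an equality of automorphisms, e.g. $\T_\alpha\T_{\alpha'}=\T_{\alpha'}\T_\alpha$. I will evaluate both sides on the four generators $P_n\otimes I_m$, $Q_n\otimes I_m$, $I_n\otimes P_m$, $I_n\otimes Q_m$ using the explicit forms of Lemma \ref{first_form}. Each side is an element of $\overline{\CC}_{(n,m)}$, acting as a genuine automorphism of $\H_{(n,m)}$ that fixes scalars, so the scalar factors can be compared exactly, not just up to $\sim$. Comparing coefficients of the same Pauli monomial then gives equations among the $\lambda_i,\lambda'_i,\mu_i,\mu'_i$.

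\textbf{(i)} I will use $\alpha\alpha'=\alpha'\alpha$. Look at $P_n\otimes I_m$. On one side, $\T_{\alpha'}$ multiplies it by $\lambda'_1$ and then $\T_\alpha$ multiplies by $\lambda_1$. On the other side the two scalars appear in the opposite order, so this component gives nothing. Next look at $Q_n\otimes I_m$. The composite $\T_{\alpha'}\T_\alpha$ sends it to $\lambda_2\T_{\alpha'}(P_nQ_n\otimes I_m)$. Writing $P_nQ_n\otimes I_m=(P_n\otimes I_m)(Q_n\otimes I_m)$, this equals $\lambda_2\lambda'_1\lambda'_2(P_nQ_n\otimes I_m)$. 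The composite $\T_\alpha\T_{\alpha'}$ gives $\lambda'_2\lambda_2(P_nQ_n\otimes I_m)$. Comparing yields $\lambda'_1=1$. The symmetric computation on $I_n\otimes Q_m$ yields $\lambda_3=1$.

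\textbf{(ii)} The same scheme with $\alpha\beta'=\beta'\alpha$ should work. On $Q_n\otimes I_m$, $\T_\alpha$ produces $P_nQ_n\otimes I_m$. Applying $\T_{\beta'}$, the two factors pick up $\mu'_1\mu'_2$, while the other order picks up only $\mu'_2$, hence $\mu'_1=1$. On $I_n\otimes P_m$, $\T_{\beta'}$ produces $\mu'_3(I_n\otimes Q_m^{-1})$. Applying $\T_\alpha$ to $Q_m^{-1}$ gives the factor $\lambda_4^{-1}$, to be compared with $\lambda_3$ in the other order. Since $\lambda_3=1$ by (i), this gives $\lambda_4=1$.

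\textbf{(iii)} I will use $\beta\alpha'=\alpha'\beta$, by the mirror image of (ii). On $I_n\otimes Q_m$ one gets $\mu_3=1$, and on $P_n\otimes I_m$, via $Q_n^{-1}$, one gets $\lambda'_2=1$ using $\lambda'_1=1$.

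\textbf{(iv)} I will use $\beta\beta'=\beta'\beta$. On $Q_n\otimes I_m$, $\T_\beta$ gives $\mu_2(P_n\otimes I_m)$, and then $\T_{\beta'}$ multiplies by $\mu'_1=1$. The other order gives $\mu'_2\mu_2$, hence $\mu'_2=1$. Symmetrically, on $I_n\otimes Q_m$ one gets $\mu_4=\mu_3=1$.

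The only care needed is bookkeeping of inverses, e.g. $\T(Q^{-1})=\lambda^{-1}(\cdot)^{-1}$, and using the already established equalities in the right order (i)$\to$(iv). I expect no real obstacle. The one subtle point is justifying that the scalars can be compared exactly: the automorphisms $\T_{[\mathbb{H}]}$ are honest group automorphisms of $\H_{(n,m)}$, not just maps defined up to $\sim$, so equality of the composites gives equality of the scalars.
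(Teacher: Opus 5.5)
Your proposal is correct and follows the paper's proof essentially line by line. You use the same four commutation relations from Proposition~\ref{basic relations}(i), evaluate both composites on the four Pauli generators, and chain the conclusions in the same order (i)$\to$(iv), including the same justification that the scalars can be compared exactly because the elements of $\overline{\CC}_{(n,m)}$ act as genuine automorphisms. The only cosmetic difference is that in (ii) and (iii) you read off the relation from a different generator than the paper does. For example, you get $\lambda_3\lambda_4=1$ from $I_n\otimes P_m$, whereas the paper gets $\lambda_3=\lambda_4$ from $I_n\otimes Q_m$; either works once $\lambda_3=1$ and $\lambda'_1=1$ are known.
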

\begin{proof}
We will use the relations from Proposition \ref{basic relations}.

(i) From $\alpha\alpha'=\alpha'\alpha$ it follows that $\T_{\alpha}\T_{\alpha'}=\T_{\alpha'}\T_{\alpha}$.
\begin{align*}\T_{\alpha}\T_{\alpha'}:\
  & P_n\otimes I_m\  \mapsto\  \lambda'_1(P_n\otimes I_m)\ \mapsto\ \lambda_1\lambda'_1(P_n\otimes I_m)\\
  & Q_n\otimes I_m\  \mapsto\  \lambda'_2(Q_n\otimes I_m)\ \mapsto\ \lambda_2\lambda'_2(P_nQ_n\otimes I_m)\\
  & I_n\otimes P_m\  \mapsto\  \lambda'_3(I_n\otimes P_m)\ \mapsto\ \lambda_3\lambda'_3(I_n\otimes P_m)\\
  & I_n\otimes Q_m\  \mapsto\  \lambda'_4(I_n\otimes P_mQ_m)\ \mapsto\ \lambda_4\lambda_3\lambda'_{4}(I_n\otimes P_mQ_m)
\end{align*}
\begin{align*}\T_{\alpha'}\T_{\alpha}:\
  & P_n\otimes I_m\  \mapsto\  \lambda_1(P_n\otimes I_m)\ \mapsto\ \lambda_1\lambda'_1(P_n\otimes I_m)\\
  & Q_n\otimes I_m\  \mapsto\  \lambda_2(P_nQ_n\otimes I_m)\ \mapsto\ \lambda_2\lambda'_1\lambda'_2(P_nQ_n\otimes I_m)\\
  & I_n\otimes P_m\  \mapsto\  \lambda_3(I_n\otimes P_m)\ \mapsto\ \lambda_3\lambda'_3(I_n\otimes P_m)\\
  & I_n\otimes Q_m\  \mapsto\  \lambda_4(I_n\otimes Q_m)\ \mapsto\ \lambda_4\lambda'_{4}(I_n\otimes P_mQ_m)
\end{align*}
Hence $\lambda'_1=1$ and $\lambda_3=1$.

\bigskip

(ii) From $\alpha\beta'=\beta'\alpha$ it follows that $\T_{\alpha}\T_{\beta'}=\T_{\beta'}\T_{\alpha}$.
\begin{align*}\T_{\alpha}\T_{\beta'}:\
  & P_n\otimes I_m\  \mapsto\  \mu'_1(P_n\otimes I_m)\ \mapsto\ \lambda_1\mu'_1(Q^{-1}_n\otimes I_m)\\
  & Q_n\otimes I_m\  \mapsto\  \mu'_2(Q_n\otimes I_m)\ \mapsto\ \lambda_2\mu'_2(P_nQ_n\otimes I_m)\\
  & I_n\otimes P_m\  \mapsto\  \mu'_3(I_n\otimes Q^{-1}_m)\ \mapsto\ \lambda_4^{-1}\mu'_3(I_n\otimes Q^{-1}_m)\\
  & I_n\otimes Q_m\  \mapsto\  \mu'_4(I_n\otimes P_m)\ \mapsto\ \lambda_3\mu'_4(I_n\otimes P_m)
\end{align*}
\begin{align*}\T_{\beta'}\T_{\alpha}:\
  & P_n\otimes I_m\  \mapsto\  \lambda_1(P_n\otimes I_m)\ \mapsto\ \lambda_1\mu'_1(P_n\otimes I_m)\\
  & Q_n\otimes I_m\  \mapsto\  \lambda_2(P_nQ_n\otimes I_m)\ \mapsto\ \lambda_2\mu'_1\mu'_2(P_nQ_n\otimes I_m)\\
  & I_n\otimes P_m\  \mapsto\  \lambda_3(I_n\otimes P_m)\ \mapsto\ \lambda_3\mu'_3(I_n\otimes Q^{-1}_m)\\
  & I_n\otimes Q_m\  \mapsto\  \lambda_4(I_n\otimes Q_m)\ \mapsto\ \lambda_4\mu'_{4}(I_n\otimes P_m)
\end{align*}
Hence $\mu'_1=1$ and $\lambda_3=\lambda_4$. Therefore, by (i),  $\lambda_4=\lambda_3=1$.

\bigskip

(iii) From $\beta\alpha'=\alpha'\beta$ it follows that $\T_{\beta}\T_{\alpha'}=\T_{\alpha'}\T_{\beta}$.
\begin{align*}\T_{\beta}\T_{\alpha'}:\
  & P_n\otimes I_m\  \mapsto\  \lambda'_1(P_n\otimes I_m)\ \mapsto\ \lambda'_1\mu_1(Q^{-1}_n\otimes I_m)\\
  & Q_n\otimes I_m\  \mapsto\  \lambda'_2(Q_n\otimes I_m)\ \mapsto\ \lambda'_2\mu_2(P_n\otimes I_m)\\
  & I_n\otimes P_m\  \mapsto\  \lambda'_3(I_n\otimes P_m)\ \mapsto\ \lambda'_3\mu_3(I_n\otimes P_m)\\
  & I_n\otimes Q_m\  \mapsto\  \lambda'_4(I_n\otimes P_mQ_m)\ \mapsto\ \lambda'_4\mu_3\mu_4(I_n\otimes P_mQ_m)
\end{align*}
\begin{align*}\T_{\alpha'}\T_{\beta}:\
  & P_n\otimes I_m\  \mapsto\  \mu_1(Q^{-1}_n\otimes I_m)\ \mapsto\ \lambda'^{-1}_2\mu_1(Q^{-1}_n\otimes I_m)\\
  & Q_n\otimes I_m\  \mapsto\  \mu_2(P_n\otimes I_m)\ \mapsto\ \lambda'_1\mu_2(P_n\otimes I_m)\\
  & I_n\otimes P_m\  \mapsto\  \mu_3(I_n\otimes P_m)\ \mapsto\ \lambda'_3\mu_3(I_n\otimes P_m)\\
  & I_n\otimes Q_m\  \mapsto\  \mu_4(I_n\otimes Q_m)\ \mapsto\ \lambda'_4\mu_{4}(I_n\otimes P_mQ_m)
\end{align*}
Hence $\lambda'_1=\lambda'_2$ and $\mu_3=1$. Therefore, by (i), $\lambda'_2=\lambda'_1=1$.

\bigskip

(iv) From $\beta\beta'=\beta'\beta$ it follows that $\T_{\beta}\T_{\beta'}=\T_{\beta'}\T_{\beta}$.
\begin{align*}\T_{\beta}\T_{\beta'}:\
  & P_n\otimes I_m\  \mapsto\  \mu'_1(P_n\otimes I_m)\ \mapsto\ \mu_1\mu'_1(Q^{-1}_n\otimes I_m)\\
  & Q_n\otimes I_m\  \mapsto\  \mu'_2(Q_n\otimes I_m)\ \mapsto\ \mu_2\mu'_2(P_n\otimes I_m)\\
  & I_n\otimes P_m\  \mapsto\  \mu'_3(I_n\otimes Q^{-1}_m)\ \mapsto\ \mu_4^{-1}\mu'_3(I_n\otimes Q^{-1}_m)\\
  & I_n\otimes Q_m\  \mapsto\  \mu'_4(I_n\otimes P_m)\ \mapsto\ \mu_3\mu'_4(I_n\otimes P_m)
\end{align*}
\begin{align*}\T_{\beta'}\T_{\beta}:\
  & P_n\otimes I_m\  \mapsto\  \mu_1(Q^{-1}_n\otimes I_m)\ \mapsto\ \mu_1\mu'^{-1}_2(Q^{-1}_n\otimes I_m)\\
  & Q_n\otimes I_m\  \mapsto\  \mu_2(P_n\otimes I_m)\ \mapsto\ \mu_2\mu'_1(P_n\otimes I_m)\\
  & I_n\otimes P_m\  \mapsto\  \mu_3(I_n\otimes P_m)\ \mapsto\ \mu_3\mu'_3(I_n\otimes Q^{-1}_m)\\
  & I_n\otimes Q_m\  \mapsto\  \mu_4(I_n\otimes Q_m)\ \mapsto\ \mu_4\mu'_{4}(I_n\otimes P_m)
\end{align*}
Hence  $\mu'_1=\mu'_2$ and $\mu_3=\mu_4$. Therefore, by (ii) and (iii), $\mu'_2=\mu'_1=1$ and $\mu_4=\mu_3=1$.
\end{proof}

\bigskip

Lemma \ref{basic} allows now simplify the form of the assumed automorphisms as follows:
\begin{displaymath}
\begin{array}{cc}
\begin{array}{rl}
\T_{\alpha}:
  & P_n\otimes I_m\  \mapsto\  \lambda_1(P_n\otimes I_m)\\
  & Q_n\otimes I_m\  \mapsto\  \lambda_2(P_nQ_n\otimes I_m)\\
  & I_n\otimes P_m\  \mapsto\  I_n\otimes P_m \\
  & I_n\otimes Q_m\  \mapsto\  I_n\otimes Q_m
\end{array}
&
\begin{array}{rl}
\T_{\alpha'}:
  & P_n\otimes I_m\  \mapsto\  P_n\otimes I_m\\
  & Q_n\otimes I_m\  \mapsto\  Q_n\otimes I_m\\
  & I_n\otimes P_m\  \mapsto\  \lambda'_3(I_n\otimes P_m) \\
  & I_n\otimes Q_m\  \mapsto\  \lambda'_4(I_n\otimes P_mQ_m)
\end{array}
\end{array}
\end{displaymath}

\begin{displaymath}
\begin{array}{cc}
\begin{array}{rl}
\T_{\beta}:
  & P_n\otimes I_m\  \mapsto\  \mu_1(Q^{-1}_n\otimes I_m)\\
  & Q_n\otimes I_m\  \mapsto\  \mu_2(P_n\otimes I_m)\\
  & I_n\otimes P_m\  \mapsto\  I_n\otimes P_m \\
  & I_n\otimes Q_m\  \mapsto\  I_n\otimes Q_m
\end{array}
&
\begin{array}{rl}
\T_{\beta'}:
  & P_n\otimes I_m\  \mapsto\  P_n\otimes I_m\\
  & Q_n\otimes I_m\  \mapsto\  Q_n\otimes I_m\\
  & I_n\otimes P_m\  \mapsto\  \mu'_3(I_n\otimes Q^{-1}_m) \\
  & I_n\otimes Q_m\  \mapsto\  \mu'_4(I_n\otimes P_m)
\end{array}
\end{array}
\end{displaymath}

\begin{displaymath}
\begin{array}{rl}
\begin{array}{rl}
 \T_{\gamma}:
  & P_n\otimes I_m\  \mapsto\  \nu_1(P_n\otimes I_m)\\
  & Q_n\otimes I_m\  \mapsto\  \nu_2(Q_n\otimes P^b_m)\\
  & I_n\otimes P_m\  \mapsto\  \nu_3(I_n\otimes P_m) \\
  & I_n\otimes Q_m\  \mapsto\  \nu_4(P^a_n\otimes Q_m)
\end{array}
&
\begin{array}{rl}
\T_{\alpha\alpha'}:
  & P_n\otimes I_m\  \mapsto\ \lambda_1(P_n\otimes I_m)\\
  & Q_n\otimes I_m\  \mapsto\ \lambda_2(P_nQ_n\otimes I_m)\\
  & I_n\otimes P_m\  \mapsto\ \lambda'_3(I_n\otimes P_m)\\
  & I_n\otimes Q_m\   \mapsto\ \lambda'_{4}(I_n\otimes P_mQ_m)
\end{array}
\end{array}
\end{displaymath}

\

\begin{lemma}\label{automorphisms}
\begin{displaymath}
\begin{array}{ll}
\begin{array}{ll}
\T_{\beta^2}:
  & P_n\otimes I_m\  \mapsto\   \mu_1\mu^{-1}_2(P^{-1}_n\otimes I_m)\\
  & Q_n\otimes I_m\ \mapsto\ \mu_1\mu_2(Q^{-1}_n\otimes I_m)\\
  & I_n\otimes P_m\  \mapsto\  I_n\otimes P_m\\
  & I_n\otimes Q_m\   \mapsto\ I_n\otimes Q_m
\end{array}
&
\begin{array}{ll}
\T_{\beta'^2}:
  & P_n\otimes I_m\  \mapsto\ P_n\otimes I_m\\
  & Q_n\otimes I_m\ \mapsto\ Q_n\otimes I_m\\
  & I_n\otimes P_m\  \mapsto\ \mu'_3\mu'^{-1}_4(I_n\otimes P^{-1}_m)\\
  & I_n\otimes Q_m\  \mapsto\ \mu'_3\mu'_4(I_n\otimes Q^{-1}_m)
\end{array}
\end{array}
\end{displaymath}
\end{lemma}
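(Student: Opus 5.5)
Since $\T$ is a group homomorphism, $\T_{\beta^2}=\T_\beta\T_\beta$ and $\T_{\beta'^2}=\T_{\beta'}\T_{\beta'}$. The plan is therefore to compose the simplified forms of $\T_\beta$ and $\T_{\beta'}$ obtained after Lemma \ref{basic} with themselves and evaluate the result on the four generators $P_n\otimes I_m$, $Q_n\otimes I_m$, $I_n\otimes P_m$, $I_n\otimes Q_m$ of $\H_{(n,m)}$ modulo its center. Two facts are used throughout. First, each $\T_{[\mathbb{H}]}$ is an automorphism acting as the identity on scalars. Second, it is multiplicative, so $\T(X^{-1})=\T(X)^{-1}$.

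For $\T_{\beta^2}$ we compute generator by generator.
\begin{itemize}
\item $P_n\otimes I_m\mapsto \mu_1(Q_n^{-1}\otimes I_m)$. Applying $\T_\beta$ again gives $\mu_1\,\T_\beta(Q_n\otimes I_m)^{-1}=\mu_1\mu_2^{-1}(P_n^{-1}\otimes I_m)$.
\item $Q_n\otimes I_m\mapsto \mu_2(P_n\otimes I_m)$. Applying $\T_\beta$ again gives $\mu_2\mu_1(Q_n^{-1}\otimes I_m)$.
\item $I_n\otimes P_m$ and $I_n\otimes Q_m$ are fixed by $\T_\beta$ (Lemma \ref{basic}(iii),(iv) give $\mu_3=\mu_4=1$), hence also by $\T_{\beta^2}$.
\end{itemize}

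For $\T_{\beta'^2}$ the computation is symmetric.
\begin{itemize}
\item $I_n\otimes P_m\mapsto \mu'_3(I_n\otimes Q_m^{-1})\mapsto \mu'_3\mu'^{-1}_4(I_n\otimes P_m^{-1})$.
\item $I_n\otimes Q_m\mapsto\mu'_4(I_n\otimes P_m)\mapsto\mu'_4\mu'_3(I_n\otimes Q_m^{-1})$.
\item The first-factor generators are fixed, since $\mu'_1=\mu'_2=1$ by Lemma \ref{basic}.
\end{itemize}

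The only point needing care is the inversion step: we use $\T_\beta((Q_n\otimes I_m)^{-1})=(\mu_2 P_n\otimes I_m)^{-1}=\mu_2^{-1}(P_n^{-1}\otimes I_m)$, with the scalar passing through because $\T$ fixes scalars. Apart from that, the proof is routine bookkeeping with no real obstacle.
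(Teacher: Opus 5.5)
Your proposal is correct and follows the same route as the paper. Like the paper, you compose the simplified forms of $\T_\beta$ and $\T_{\beta'}$ (obtained after Lemma \ref{basic}) with themselves generator by generator, using that each $\T_{[\mathbb{H}]}$ fixes scalars and respects inverses.
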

\begin{proof}
\begin{align*}\T_{\beta^2}=\T_{\beta}\cdot \T_{\beta}:\
  & P_n\otimes I_m\  \mapsto\  \mu_1(Q^{-1}_n\otimes I_m)\ \mapsto\ \mu_1\mu^{-1}_2(P^{-1}_n\otimes I_m)\\
  & Q_n\otimes I_m\  \mapsto\  \mu_2(P_n\otimes I_m)\ \mapsto\ \mu_1\mu_2(Q^{-1}_n\otimes I_m)\\
  & I_n\otimes P_m\  \mapsto\  I_n\otimes P_m\ \mapsto\ I_n\otimes P_m\\
  & I_n\otimes Q_m\  \mapsto\  I_n\otimes Q_m\ \mapsto\ I_n\otimes Q_m
\end{align*}
\begin{align*}\T_{\beta'^2}=\T_{\beta'}\cdot \T_{\beta'}:\
  & P_n\otimes I_m\  \mapsto\  P_n\otimes I_m\ \mapsto\ P_n\otimes I_m\\
  & Q_n\otimes I_m\  \mapsto\  Q_n\otimes I_m\ \mapsto\ Q_n\otimes I_m\\
  & I_n\otimes P_m\  \mapsto\  \mu'_3(I_n\otimes Q^{-1}_m)\ \mapsto\ \mu'_3\mu'^{-1}_4(I_n\otimes P^{-1}_m)\\
  & I_n\otimes Q_m\  \mapsto\  \mu'_4(I_n\otimes P_m)\ \mapsto\ \mu'_3\mu'_4(I_n\otimes Q^{-1}_m)
\end{align*}
\end{proof}

\begin{lemma}\label{lemma 2.3}
The following holds:
\begin{enumerate}
 \item[(i)] $\lambda_1=\pm1$, $\frac{\mu_2}{\mu_1}=\lambda^2_2\omega_n$
 \item[(ii)] $\lambda'_3=\pm1$, 
 \item[(iii)] $\nu_1=(\lambda'_3)^b$, $\nu_3=\lambda_1^a$.
 \item[(iv)] $\nu^b_3=1$, $\nu^2_4=\nu_1^{a}(\frac{\mu_2}{\mu_1})^a$.
 \item[(v)] $\nu^a_1=1$.
\end{enumerate}
where $\omega_n=e^{\frac{2\pi\mathbf{i}}{n}}\in\C$.
\end{lemma}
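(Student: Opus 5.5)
The plan is to argue exactly as in Lemma \ref{basic}: take a relation among the generators from Proposition \ref{basic relations}, use that $\T$ is a homomorphism to get an equality of two composite automorphisms of $\H_{(n,m)}$, and evaluate both sides on the four generators $P_n\otimes I_m$, $Q_n\otimes I_m$, $I_n\otimes P_m$, $I_n\otimes Q_m$. Throughout I will use the simplified forms of $\T_\alpha,\T_{\alpha'},\T_\beta,\T_{\beta'},\T_\gamma,\T_{\alpha\alpha'}$ displayed after Lemma \ref{basic}, and the squares $\T_{\beta^2},\T_{\beta'^2}$ from Lemma \ref{automorphisms}. I will also use that every $\T_{[\mathbb H]}$ is a conjugation, hence multiplicative and trivial on scalars. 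The only commutation rule needed is $Q_nP_n=\omega_nP_nQ_n$, used to reorder products such as $(P_nQ_n)^{-1}=Q_n^{-1}P_n^{-1}$ versus $P_n^{-1}Q_n^{-1}$.

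For (i) I use $\alpha\beta^2=\beta^2\alpha$ from Proposition \ref{basic relations}(iii).
\begin{itemize}
\item On $P_n\otimes I_m$, the two orders give $\mu_1\mu_2^{-1}\lambda_1^{-1}(P_n^{-1}\otimes I_m)$ and $\lambda_1\mu_1\mu_2^{-1}(P_n^{-1}\otimes I_m)$. Hence $\lambda_1^2=1$.
\item On $Q_n\otimes I_m$, one order gives $\mu_1\mu_2\lambda_2^{-1}(Q_n^{-1}P_n^{-1}\otimes I_m)$. The other gives $\lambda_2\cdot\mu_1\mu_2^{-1}\cdot\mu_1\mu_2\,(P_n^{-1}Q_n^{-1}\otimes I_m)$.
\item Rewriting $P_n^{-1}Q_n^{-1}$ as a power of $\omega_n$ times $Q_n^{-1}P_n^{-1}$ and comparing scalars gives $\mu_2/\mu_1=\lambda_2^2\omega_n$. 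The exact exponent of $\omega_n$ is the bookkeeping point to check carefully.
\end{itemize}
Statement (ii) follows by the verbatim computation with $\alpha'\beta'^2=\beta'^2\alpha'$, using the first generator of the second factor; it gives $\lambda_3'^2=1$.

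For (iii) I use $(\alpha\alpha')\gamma=\gamma(\alpha\alpha')$ from Proposition \ref{basic relations}(iv).
\begin{itemize}
\item On $Q_n\otimes I_m$, applying $\T_\gamma$ first and then $\T_{\alpha\alpha'}$ gives $\nu_2\lambda_2\lambda_3'^{\,b}(P_nQ_n\otimes P_m^b)$.
\item The reverse order gives $\lambda_2\cdot\T_\gamma(P_n\otimes I_m)\,\T_\gamma(Q_n\otimes I_m)=\lambda_2\nu_1\nu_2(P_nQ_n\otimes P_m^b)$.
\item Comparing yields $\nu_1=\lambda_3'^{\,b}$. Symmetrically, on $I_n\otimes Q_m$ one gets $\nu_3=\lambda_1^a$.
\end{itemize}

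For (iv) and (v) I use the involution relations $(\beta^2\gamma)^2=id$ and $(\beta'^2\gamma)^2=id$, so that $\T_{\beta^2}\T_\gamma\T_{\beta^2}\T_\gamma$ must be the identity automorphism.
\begin{itemize}
\item Evaluate this composite on $I_n\otimes P_m$. Here $\T_\gamma$ contributes $\nu_3$ twice and $\T_{\beta^2}$ acts trivially on the second factor, giving $\nu_3^2=1$. Together with $\nu_3=\lambda_1^a$ and $\lambda_1=\pm1$ this should be massaged into $\nu_3^b=1$. If that turns out not to follow directly, I will instead take the component on the second factor and use $\gamma^{\gcd(n,m)}=id$ as well.
\item Evaluate it on $I_n\otimes Q_m$. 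Here $\T_\gamma$ produces $\nu_4(P_n^a\otimes Q_m)$, then $\T_{\beta^2}$ inverts $P_n^a$ with factor $(\mu_1/\mu_2)^a$, then $\T_\gamma$ again, and so on. Collecting the scalars, including the $\nu_1^{\pm a}$ from $\T_\gamma(P_n^{\mp a}\otimes I_m)$, should give $\nu_4^2=\nu_1^a(\mu_2/\mu_1)^a$.
\item Finally, the analogous computation with $\beta'^2$ on $P_n\otimes I_m$ gives $\nu_1^a=1$ (or $\nu_1^2=1$, combined as above).
\end{itemize}

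The main obstacle is this last group of computations. These are four-fold compositions in which noncommuting Pauli factors appear, and the placement of the $\omega_n$ phases and the signs of exponents must be tracked exactly. That is where I would spend the care, if necessary cross-checking with the relation $\gamma^{\gcd(n,m)}=id$.
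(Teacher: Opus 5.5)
\textbf{There is a gap.} Your strategy is the paper's, and your computations for (i)--(iii) and for $\nu_4^2=\nu_1^a(\mu_2/\mu_1)^a$ in (iv) are correct. The gap is in the relations $\nu_3^b=1$ and $\nu_1^a=1$. These are exactly what force $\lambda_1=\lambda'_3=1$ in Lemma~\ref{lemma 4.1}.

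Evaluating $(\T_{\beta^2}\T_\gamma)^2$ on $I_n\otimes P_m$ gives only $\nu_3^2=1$. Evaluating $(\T_{\beta'^2}\T_\gamma)^2$ on $P_n\otimes I_m$ gives only $\nu_1^2=1$, not $\nu_1^a=1$ as you claim. Neither can be upgraded in the way you suggest. The relations $\nu_3^2=1$, $\nu_3=\lambda_1^a$, $\lambda_1=\pm1$ are all satisfied by $\lambda_1=-1$, $\nu_3=-1$ when $a$ is odd, and then $\nu_3^b=-1$ for odd $b$. Your fallback through $\gamma^{\gcd(n,m)}=id$ adds only $\nu_3^d=\nu_1^d=1$, plus relations involving $\nu_2$ and $\nu_4$ (Proposition~\ref{proposition 3.1}). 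This is vacuous given $\nu_3^2=\nu_1^2=1$ when $d$ is even. Odd $a,b$ with even $d$ is exactly the situation $n\equiv m\equiv 2\pmod{4}$ where the lemma is applied, so your plan fails precisely where it is needed.

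The missing observation is which generators to test: the ones whose $\T_\gamma$-image acquires a factor in the \emph{other} tensor slot. With $X=\T_{\beta^2}\T_\gamma$ you have $X(Q_n\otimes I_m)=\nu_2\mu_1\mu_2(Q_n^{-1}\otimes P_m^b)$. Applying $X$ again, the scalar produced by $Q_n^{-1}$ cancels $\nu_2\mu_1\mu_2$, while $I_n\otimes P_m^b$ picks up $\nu_3^b$. Hence $X^2(Q_n\otimes I_m)=\nu_3^b(Q_n\otimes I_m)$, so $\nu_3^b=1$. Symmetrically, with $Y=\T_{\beta'^2}\T_\gamma$ you have $Y(I_n\otimes Q_m)=\nu_4\mu'_3\mu'_4(P_n^a\otimes Q_m^{-1})$ and $Y^2(I_n\otimes Q_m)=\nu_1^a(I_n\otimes Q_m)$, so $\nu_1^a=1$. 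This is how the paper proceeds.

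On the $\omega_n$ exponent you flagged in (i): with the stated convention $Q_nP_n=\omega_nP_nQ_n$ one has $P_n^{-1}Q_n^{-1}=\omega_n^{-1}Q_n^{-1}P_n^{-1}$. Careful bookkeeping therefore yields $\mu_2/\mu_1=\lambda_2^2\omega_n^{-1}$. The paper's $\omega_n$ corresponds to the opposite commutation convention. The discrepancy is harmless downstream, because $\omega_d$ and $\omega_d^{-1}$ both have order $d$.
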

\begin{proof}
We will use the relations from Proposition \ref{basic relations} and the form of automorphisms from Lemma \ref{automorphisms}.

(i) From $\alpha\beta^2=\beta^2\alpha$ it follows that $\T_{\alpha}\T_{\beta^2}=\T_{\beta^2}\T_{\alpha}$.
\begin{align*}
\T_{\alpha}\T_{\beta^2}:\
  & P_n\otimes I_m\  \mapsto\  \mu_1\mu^{-1}_2(P^{-1}_n\otimes I_m)\ \mapsto\ \lambda^{-1}_1\mu_1\mu^{-1}_2(P^{-1}_n\otimes I_m)\\
  & Q_n\otimes I_m\  \mapsto\  \mu_1\mu_2(Q^{-1}_n\otimes I_m)\ \mapsto\ \lambda^{-1}_2\mu_1\mu_2(Q^{-1}_nP^{-1}_n\otimes I_m)\\
  & I_n\otimes P_m\  \mapsto\  I_n\otimes P_m\ \mapsto\ I_n\otimes P_m\\
  & I_n\otimes Q_m\  \mapsto\  I_n\otimes Q_m\ \mapsto\ I_n\otimes Q_m
\end{align*}
\begin{align*}\T_{\beta^2}\T_{\alpha}:\
  & P_n\otimes I_m\  \mapsto\  \lambda_1(P_n\otimes I_m)\ \mapsto\ \lambda_1\mu_1\mu^{-1}_2(P^{-1}_n\otimes I_m)\\
  & Q_n\otimes I_m\  \mapsto\  \lambda_2(P_nQ_n\otimes I_m)\ \mapsto\ \lambda_2\mu^2_1(P^{-1}_nQ^{-1}_n\otimes I_m)=\\
& \hspace{50mm} =\lambda_2\mu^2_1\omega_n(Q^{-1}_nP^{-1}_n\otimes I_m)\\
  & I_n\otimes P_m\  \mapsto\  I_n\otimes P_m\ \mapsto\ I_n\otimes P_m\\
  & I_n\otimes Q_m\  \mapsto\  I_n\otimes Q_m\ \mapsto\ I_n\otimes Q_m
\end{align*}
Hence $\lambda_1=\pm1$ and $\mu_2=\mu_1\lambda^2_2\omega_n$.

\bigskip

(ii) From $\alpha'\beta'^2=\beta'^2\alpha'$ it follows that $\T_{\alpha'}\T_{\beta'^2}=\T_{\beta'^2}\T_{\alpha'}$.
\begin{align*}\T_{\alpha'}\T_{\beta'^2}:\
 & P_n\otimes I_m\  \mapsto\  P_n\otimes I_m\ \mapsto\ P_n\otimes I_m\\
  & Q_n\otimes I_m\  \mapsto\  Q_n\otimes I_m\ \mapsto\ Q_n\otimes I_m\\
  & I_n\otimes P_m\  \mapsto\  \mu'_3(\mu'_4)^{-1}(I_n\otimes P^{-1}_m)\ \mapsto\ (\lambda'_3)^{-1}\mu'_3(\mu'_4)^{-1}(I_n\otimes P^{-1}_m)\\
  & I_n\otimes Q_m\  \mapsto\  \mu'_3\mu'_4(I_n\otimes Q^{-1}_m)\ \mapsto\ (\lambda'_4)^{-1}\mu'_3\mu'_4(I_n\otimes Q^{-1}_mP^{-1}_m)
\end{align*}
\begin{align*}\T_{\beta'^2}\T_{\alpha'}:\
  & P_n\otimes I_m\  \mapsto\  P_n\otimes I_m\ \mapsto\ P_n\otimes I_m\\
  & Q_n\otimes I_m\  \mapsto\  Q_n\otimes I_m\ \mapsto\ Q_n\otimes I_m\\
    & I_n\otimes P_m\  \mapsto\  \lambda'_3(I_n\otimes P_m)\ \mapsto\ \lambda'_3\mu'_3(\mu'_4)^{-1}(I_n\otimes P^{-1}_m)\\
  & I_n\otimes Q_m\  \mapsto\  \lambda'_4(I_n\otimes P_mQ_m)\ \mapsto\ \lambda'_4(\mu'_3)^2(I_n\otimes P^{-1}_m Q^{-1}_m)=\\
& \hspace{50mm}  =\lambda'_4(\mu'_3)^2\omega_m(I_n\otimes Q^{-1}_mP^{-1}_m)
\end{align*}
Hence $\lambda'_3=\pm1$. 

\bigskip

(iii) From $(\alpha\alpha')\gamma=\gamma(\alpha\alpha')$ it follows that $\T_{\alpha\alpha'}\T_{\gamma}=\T_{\gamma}\T_{\alpha\alpha'}$.
\begin{align*}\T_{\alpha\alpha'}\T_{\gamma}:\
  & P_n\otimes I_m\  \mapsto\  \nu_1(P_n\otimes I_m)\ \mapsto\ \nu_1\lambda_1(P_n\otimes I_m)\\
  & Q_n\otimes I_m\  \mapsto\  \nu_2(Q_n\otimes P_m^b)\ \mapsto\ \nu_2\lambda_2\lambda'^b_3(P_nQ_n\otimes P_m^b)\\
  & I_n\otimes P_m\  \mapsto\  \nu_3(I_n\otimes P_m)\ \mapsto\ \nu_3\lambda'_3(I_n\otimes P_m)\\
  & I_n\otimes Q_m\  \mapsto\  \nu_4(P_n^a\otimes Q_m)\ \mapsto\ \nu_4\lambda_1^a\lambda'_{4}(P_n^a\otimes P_mQ_m)
\end{align*}
\begin{align*}\T_{\gamma}\T_{\alpha\alpha'}:\
  & P_n\otimes I_m\  \mapsto\  \lambda_1(P_n\otimes I_m)\ \mapsto\ \lambda_1\nu_1(P_n\otimes I_m)\\
  & Q_n\otimes I_m\  \mapsto\  \lambda_2(P_nQ_n\otimes I_m)\ \mapsto\ \lambda_2\nu_1\nu_2(P_nQ_n\otimes P_m^b)\\
  & I_n\otimes P_m\  \mapsto\  \lambda'_3(I_n\otimes P_m)\ \mapsto\ \lambda'_3\nu_3(I_n\otimes P_m)\\
  & I_n\otimes Q_m\  \mapsto\  \lambda'_4(I_n\otimes P_mQ_m)\ \mapsto\ \lambda'_{4}\nu_3\nu_4(P_n^a\otimes P_mQ_m)
\end{align*}
Hence $\nu_1=(\lambda'_3)^b$ and $\nu_3=\lambda_1^a$.

\bigskip

(iv) First, let us find the form of  $\T_{\beta^2\gamma}$.

\begin{align*}\T_{\beta^2\gamma}=\T_{\beta^2}\T_{\gamma}:\
  & P_n\otimes I_m\  \mapsto\  \nu_1(P_n\otimes I_m)\ \mapsto\ \nu_1\mu_1\mu^{-1}_2(P_n^{-1}\otimes I_m)\\
  & Q_n\otimes I_m\  \mapsto\  \nu_2(Q_n\otimes P_m^b)\ \mapsto\ \nu_2\mu_1\mu_2(Q^{-1}_n\otimes P_m^b)\\
  & I_n\otimes P_m\  \mapsto\  \nu_3(I_n\otimes P_m)\ \mapsto\ \nu_3(I_n\otimes P_m)\\
  & I_n\otimes Q_m\  \mapsto\  \nu_4(P_n^a\otimes Q_m)\ \mapsto\ \nu_4\mu^a_1\mu^{-a}_2(P_n^{-a}\otimes Q_m)
\end{align*}
Now, from $(\beta^2\gamma)^2=id$ it follows that $(\T_{\beta^2\gamma})^2$ is an identical automorphism. We have
\begin{align*}(\T_{\beta^2\gamma})^2:\
  & P_n\otimes I_m\  \mapsto\  \nu_1\mu_1\mu^{-1}_2(P_n^{-1}\otimes I_m)\ \mapsto\ P_n\otimes I_m\\
  & Q_n\otimes I_m\  \mapsto\  \nu_2\mu_1\mu_2(Q^{-1}_n\otimes P_m^b)\ \mapsto\ \nu^b_3(Q_n\otimes I_m)\\
  & I_n\otimes P_m\  \mapsto\  \nu_3(I_n\otimes P_m)\ \mapsto\ \nu^2_3(I_n\otimes P_m)\\
  & I_n\otimes Q_m\  \mapsto\  \nu_4\mu^a_1\mu^{-a}_2(P_n^{-a}\otimes Q_m)\ \mapsto\ \nu^2_4\nu^{-a}_1\mu^a_1\mu^{-a}_2(I_n\otimes Q_m)
\end{align*}
and therefore $\nu^b_3=1$ and $\nu^2_4\nu^{-a}_1\mu^a_1\mu^{-a}_2=1$.

\bigskip

(v) Similarly, as in (iii), we first find the form of $\T_{\beta'^2\gamma}$.
\begin{align*}\T_{\beta'^2\gamma}=\T_{\beta'^2}\T_{\gamma}:\
  & P_n\otimes I_m\  \mapsto\  \nu_1(P_n\otimes I_m)\ \mapsto\ \nu_1(P_n\otimes I_m)\\
  & Q_n\otimes I_m\  \mapsto\  \nu_2(Q_n\otimes P_m^b)\ \mapsto\ \nu_2\mu'^b_3\mu'^{-b}_4(Q_n\otimes P_m^{-b})\\
  & I_n\otimes P_m\  \mapsto\  \nu_3(I_n\otimes P_m)\ \mapsto\ \nu_3\mu'_3\mu'^{-1}_4(I_n\otimes P^{-1}_m)\\
  & I_n\otimes Q_m\  \mapsto\  \nu_4(P_n^a\otimes Q_m)\ \mapsto\ \nu_4\mu'_3\mu'_4(P_n^{a}\otimes Q^{-1}_m)
\end{align*}

From $(\beta'^2\gamma)^2=id$ it follows that $(\T_{\beta'^2\gamma})^2$ is an identical automorphism. We have
\begin{align*}(\T_{\beta'^2\gamma})^2:\
  & P_n\otimes I_m\  \mapsto\  \nu_1(P_n\otimes I_m)\ \mapsto\ \nu^2_1(P_n\otimes I_m)\\
  & Q_n\otimes I_m\  \mapsto\  \nu_2\mu'^b_3\mu'^{-b}_4(Q_n\otimes P_m^{-b})\ \mapsto\ \nu^2_2\nu^{-b}_3\mu'^b_3\mu'^{-b}_4(Q_n\otimes I_m)\\
  & I_n\otimes P_m\  \mapsto\  \nu_3\mu'_3\mu'^{-1}_4(I_n\otimes P^{-1}_m)\ \mapsto\ I_n\otimes P_m\\
  & I_n\otimes Q_m\  \mapsto\  \nu_4\mu'_3\mu'_4(P_n^{a}\otimes Q^{-1}_m)\ \mapsto\ \nu^a_1(I_n\otimes Q_m)
\end{align*}
and thus $\nu^a_1=1$.
\end{proof}

\begin{proposition}\label{proposition 3.1}
For $k\in\N$ there holds:
\begin{enumerate}
 \item[(i)] \begin{align*}(\T_{\gamma})^k:\
  & P_n\otimes I_m\  \mapsto\  \nu^k_1(P_n\otimes I_m)\\
  & Q_n\otimes I_m\  \mapsto\  \nu^k_2\nu_3^{b\cdot\sum_{\ell=1}^{k-1}\ell}(Q_n\otimes P_m^{kb})\\
  & I_n\otimes P_m\  \mapsto\  \nu^k_3(I_n\otimes P_m)\\
  & I_n\otimes Q_m\  \mapsto\  \nu^k_4\nu_1^{a\cdot\sum_{\ell=1}^{k-1}\ell}(P_n^{ka}\otimes Q_m)
\end{align*}
 \item[(ii)] \begin{align*}(\T_{\alpha})^k:\
  & P_n\otimes I_m\  \mapsto\  \lambda^k_1(P_n\otimes I_m)\\
  & Q_n\otimes I_m\  \mapsto\  \lambda^k_2\lambda_1^{\sum_{\ell=1}^{k-1}\ell}(P_n^{k}Q_n\otimes I_m)\\
  & I_n\otimes P_m\  \mapsto\  I_n\otimes P_m\\
  & I_n\otimes Q_m\  \mapsto\  I_n\otimes Q_m
\end{align*}
\end{enumerate}
\end{proposition}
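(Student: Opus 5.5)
The plan is to prove both parts by induction on $k$, using only that each $\T_{[\mathbb{H}]}$ is an inner automorphism $Ad_U$. Such a map is multiplicative, fixes scalars (it is identical on the center), and sends integer powers of a matrix to the corresponding powers of its image. The needed images are read off from the simplified forms of $\T_\gamma$ and $\T_\alpha$ obtained after Lemma \ref{basic}, namely $\T_\alpha(I_n\otimes P_m)=I_n\otimes P_m$ and $\T_\alpha(I_n\otimes Q_m)=I_n\otimes Q_m$. For $k=1$ the formulas reduce exactly to these listed forms, since the empty sums vanish. Because $(\T_\gamma)^{k+1}=\T_\gamma\circ(\T_\gamma)^k=(\T_\gamma)^k\circ\T_\gamma$, there is no ambiguity in the order of composition; I will always apply $\T_\gamma$ (resp.\ $\T_\alpha$) to the already computed image under the $k$-th power.

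For (i), the images of $P_n\otimes I_m$ and $I_n\otimes P_m$ are immediate, since each application only multiplies by $\nu_1$ resp.\ $\nu_3$. For $Q_n\otimes I_m$, assume $(\T_\gamma)^k(Q_n\otimes I_m)=c_k\,(Q_n\otimes P_m^{kb})$ with $c_k=\nu_2^k\nu_3^{b\sum_{\ell=1}^{k-1}\ell}$. I factor $Q_n\otimes P_m^{kb}=(Q_n\otimes I_m)(I_n\otimes P_m)^{kb}$ and apply $\T_\gamma$, which gives
\[
c_k\,\nu_2(Q_n\otimes P_m^b)\cdot\nu_3^{kb}(I_n\otimes P_m^{kb})=c_k\nu_2\nu_3^{kb}\,(Q_n\otimes P_m^{(k+1)b}).
\]
Here no commutation phase appears, because the second tensor factors are both powers of $P_m$ and the first factor of the second term is $I_n$. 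Since $\sum_{\ell=1}^{k-1}\ell+k=\sum_{\ell=1}^{k}\ell$, this yields the claimed coefficient for $k+1$.

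The image of $I_n\otimes Q_m$ is handled symmetrically. I write $P_n^{ka}\otimes Q_m=(P_n\otimes I_m)^{ka}(I_n\otimes Q_m)$, and its image under $\T_\gamma$ is
\[
\nu_1^{ka}(P_n^{ka}\otimes I_m)\cdot\nu_4(P_n^a\otimes Q_m)=\nu_1^{ka}\nu_4\,(P_n^{(k+1)a}\otimes Q_m).
\]
Again no reordering is needed, so the same exponent bookkeeping closes the induction.

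For (ii) the argument is identical. I write $P_n^kQ_n\otimes I_m=(P_n\otimes I_m)^k(Q_n\otimes I_m)$, whose image under $\T_\alpha$ is $\lambda_1^k(P_n^k\otimes I_m)\cdot\lambda_2(P_nQ_n\otimes I_m)=\lambda_1^k\lambda_2(P_n^{k+1}Q_n\otimes I_m)$. The $m$-part is fixed by $\T_\alpha$. The only point requiring care, which I expect to be the main (mild) obstacle, is ensuring that each factorization is written in an order that avoids using $Q_nP_n=\omega_nP_nQ_n$, so that no stray powers of $\omega_n$ or $\omega_m$ appear. The factorizations chosen above, powers of $P$ placed to the left of $Q$ in each tensor slot, achieve this.
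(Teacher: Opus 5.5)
Your proposal is correct and follows essentially the same route as the paper: induction on $k$, writing $(\T_\gamma)^{k+1}=\T_\gamma\circ(\T_\gamma)^k$ (resp.\ $(\T_\alpha)^{k+1}=\T_\alpha\circ(\T_\alpha)^k$), using that these maps are multiplicative and fix scalars, and applying them to the factorizations $(Q_n\otimes I_m)(I_n\otimes P_m)^{kb}$ and $(P_n\otimes I_m)^k(Q_n\otimes I_m)$. You also write out the $I_n\otimes Q_m$ case that the paper leaves as ``similar,'' and you correctly note that (ii) relies on $\lambda_3=\lambda_4=1$ from Lemma~\ref{basic}.
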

\begin{proof}
(i) We will proceed by induction. The case $k=1$ is obvious. From the induction step we obtain that
$$(\T_{\gamma})^{k+1}(Q_n\otimes I_m)=\T_\gamma\big((\T_{\gamma})^{k}(Q_n\otimes I_m)\big)=\T_\gamma\left(\nu^k_2\nu_3^{b\cdot\sum_{\ell=1}^{k-1}\ell}(Q_n\otimes P_m^{kb})\right)=$$
$$=\nu^k_2\nu_3^{b\cdot\sum_{\ell=1}^{k-1}\ell}\T_\gamma(Q_n\otimes I_m)\Big(\T_\gamma(I_n\otimes P_m)\Big)^{kb}=\nu^{k+1}_2\nu_3^{b\cdot\sum_{\ell=1}^{k}\ell}(Q_n\otimes P^b_m)(I_n\otimes P_m^{kb})=$$
$$=\nu^{k+1}_2\nu_3^{b\cdot\sum_{\ell=1}^{k}\ell}\Big(Q_n\otimes P_m^{(k+1)b}\Big)\ .$$
The rest is similar.

(ii) Again, we will proceed by  induction. The case $k=1$ is obvious. From the induction step we obtain that
$$(\T_{\alpha})^{k+1}(Q_n\otimes I_m)=\T_\alpha\big((\T_{\alpha})^{k}(Q_n\otimes I_m)\big)=\T_\alpha\left(\lambda^k_2\lambda_1^{\sum_{\ell=1}^{k-1}\ell}(P^k_nQ_n\otimes I_m)\right)=$$
$$=\lambda^k_2\lambda_1^{\sum_{\ell=1}^{k-1}\ell}\Big(\T_\alpha(P_n\otimes I_m)\Big)^{k}\T_\alpha(Q_n\otimes I_m)=\lambda^{k+1}_2\lambda_1^{\sum_{\ell=1}^{k}\ell}(P^k_n\otimes I_m)(P_nQ_n\otimes I_m)=$$
$$=\lambda^{k+1}_2\lambda_1^{\sum_{\ell=1}^{k}\ell}(P^{k+1}_nQ_n\otimes I_m)\ .$$
The rest is similar.

\end{proof}

\begin{lemma}\label{lemma 2.5}
The following holds:
\begin{enumerate}
 \item[(i)]  $\nu^d_2\nu_3^{b\binom{d}{2}}=1$, $\nu^d_4\nu_1^{a\binom{d}{2}}=1$.
 \item[(ii)] $\lambda^n_2\lambda_1^{\binom{n}{2}}=1$.
\end{enumerate}
where $d=\gcd(n,m)$ and $\binom{d}{2}$ is the binomial coefficient.
\end{lemma}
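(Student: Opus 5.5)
The plan is to combine the order relations of Proposition \ref{basic relations}(ii) with the explicit power formulas of Proposition \ref{proposition 3.1}, using that $\T$ is a group homomorphism. Since $\gamma^{d}=id$ with $d=\gcd(n,m)$, and $\alpha^n=id$, we get $(\T_\gamma)^d=\T_{\gamma^d}=\T_{id}$ and $(\T_\alpha)^n=\T_{id}$. Because $\T$ is a homomorphism, $\T_{id}$ is the identity element of $\overline{\CC}_{(n,m)}$, i.e.\ the identical automorphism of $\H_{(n,m)}$. Hence every generator $P_n\otimes I_m$, $Q_n\otimes I_m$, $I_n\otimes P_m$, $I_n\otimes Q_m$ must be fixed exactly (not merely up to a scalar) by $(\T_\gamma)^d$ and by $(\T_\alpha)^n$.

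For (i), substitute $k=d$ into Proposition \ref{proposition 3.1}(i) and use $\sum_{\ell=1}^{d-1}\ell=\binom{d}{2}$. This gives
$(\T_\gamma)^d(Q_n\otimes I_m)=\nu_2^d\nu_3^{b\binom{d}{2}}(Q_n\otimes P_m^{db})$.
Since $db=m$, we have $P_m^{db}=P_m^m=I_m$, so the image is $\nu_2^d\nu_3^{b\binom{d}{2}}(Q_n\otimes I_m)$. Comparing with $Q_n\otimes I_m$ yields the first equality.

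The second equality follows in the same way from
$(\T_\gamma)^d(I_n\otimes Q_m)=\nu_4^d\nu_1^{a\binom{d}{2}}(P_n^{da}\otimes Q_m)$,
using $da=n$ and $P_n^n=I_n$.

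For (ii), take $k=n$ in Proposition \ref{proposition 3.1}(ii). This gives
$(\T_\alpha)^n(Q_n\otimes I_m)=\lambda_2^n\lambda_1^{\binom{n}{2}}(P_n^nQ_n\otimes I_m)=\lambda_2^n\lambda_1^{\binom{n}{2}}(Q_n\otimes I_m)$,
and comparison with $Q_n\otimes I_m$ gives the claim.

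The only point requiring care is justifying that the scalar must be exactly $1$. The elements $\T_{[\mathbb{H}]}$ act as genuine automorphisms of $\H_{(n,m)}$, and $\T_{id}$ acts as conjugation by a scalar multiple of the identity, which fixes everything exactly. It follows that no residual phase can remain, which is precisely the argument already used in Lemma \ref{lemma 2.3}(iv),(v). Beyond this, I expect no real obstacle; the computation is a direct specialization of the earlier power formulas.
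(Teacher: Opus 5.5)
Your proposal is correct and matches the paper's proof, which simply says the lemma follows from the power formulas of Proposition~\ref{proposition 3.1} together with $\gamma^{d}=id$ and $\alpha^{n}=id$ from Proposition~\ref{basic relations}(ii); you carry this out by specializing to $k=d$ and $k=n$, using $P_m^{db}=P_m^{m}=I_m$ and $P_n^{n}=I_n$. Your remark that $\T_{id}$ is the identity automorphism, so the scalars must equal exactly $1$, supplies the one detail the paper leaves implicit.
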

\begin{proof}
Follows from Propositions \ref{proposition 3.1} and \ref{basic relations}(ii).
\end{proof}

\bigskip

Equations for the scalar coefficients now can be simplified in the following way.

\begin{lemma}\label{lemma 3.4}
The following holds:
 \begin{enumerate}
    \item[(i)] $\lambda_1=\pm1$, $(\lambda_1)^{ab}=1$, $\lambda'_3=\pm1$, $(\lambda'_3)^{ab}=1$.
   \item[(ii)] $\nu^d_4=(\lambda'_3)^{\frac{abd(d-1)}{2}}$,  $\lambda^n_2=\lambda_1^{\frac{n(n-1)}{2}}$.
 \item[(iii)] $\nu^2_4=\lambda^{2a}_{2}\omega_d$.
 \end{enumerate}
 where  $d=\gcd(n,m)$ and $\omega_d=e^{\frac{2\pi \mathbf{i}}{d}}\in\C$.
\end{lemma}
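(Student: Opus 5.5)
All three parts follow by substituting the relations of Lemma \ref{lemma 2.3} into one another and into Lemma \ref{lemma 2.5}, with no new computation with the automorphisms $\T$. Two facts do most of the work: $\lambda_1,\lambda'_3\in\{\pm1\}$, so they equal their own inverses; and the identity $n/a=\gcd(n,m)=d$.

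For (i), the statements $\lambda_1=\pm1$ and $\lambda'_3=\pm1$ are Lemma \ref{lemma 2.3}(i),(ii). Lemma \ref{lemma 2.3}(iii) gives $\nu_3=\lambda_1^a$, and Lemma \ref{lemma 2.3}(iv) gives $\nu_3^b=1$. Hence $\lambda_1^{ab}=\nu_3^b=1$. In the same way, $\nu_1=(\lambda'_3)^b$ from Lemma \ref{lemma 2.3}(iii) together with $\nu_1^a=1$ from Lemma \ref{lemma 2.3}(v) gives $(\lambda'_3)^{ab}=1$.

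For (ii), I start from Lemma \ref{lemma 2.5}(i), which gives $\nu_4^d=\nu_1^{-a\binom{d}{2}}$. Substituting $\nu_1=(\lambda'_3)^b$ yields $\nu_4^d=(\lambda'_3)^{-ab\,d(d-1)/2}$. Since $\lambda'_3=\pm1$, the minus sign in the exponent can be dropped, which gives the claimed formula. Likewise, Lemma \ref{lemma 2.5}(ii) gives $\lambda_2^n=\lambda_1^{-\binom{n}{2}}$, and $\lambda_1=\pm1$ turns this into $\lambda_2^n=\lambda_1^{n(n-1)/2}$.

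For (iii), I use Lemma \ref{lemma 2.3}(iv), namely $\nu_4^2=\nu_1^a(\mu_2/\mu_1)^a$. By Lemma \ref{lemma 2.3}(v) we have $\nu_1^a=1$, and by Lemma \ref{lemma 2.3}(i) we have $\mu_2/\mu_1=\lambda_2^2\omega_n$. Therefore $\nu_4^2=\lambda_2^{2a}\omega_n^{a}$. Finally, $\omega_n^a=e^{2\pi\mathbf{i}a/n}$, and $a=n/\gcd(n,m)$ means $n/a=d$, so $\omega_n^a=\omega_d$.

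No step is a real obstacle. The only point needing care is the sign bookkeeping: inverses of $\lambda_1$ and $\lambda'_3$ may be replaced by the elements themselves only because these elements are $\pm1$.
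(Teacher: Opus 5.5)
Your proposal is correct given Lemmas \ref{lemma 2.3} and \ref{lemma 2.5}, and it is essentially the paper's own proof: you make the same substitutions $\nu_3=\lambda_1^a$ and $\nu_1=(\lambda'_3)^b$ into $\nu_3^b=1$, $\nu_1^a=1$ and into Lemma \ref{lemma 2.5}, drop inverses because $\lambda_1,\lambda'_3=\pm1$, and use $\omega_n^a=\omega_d$ for (iii). One caveat you inherit from the paper: with the convention $Q_nP_n=\omega_nP_nQ_n$ one has $P_n^{-1}Q_n^{-1}=\omega_n^{-1}Q_n^{-1}P_n^{-1}$, so Lemma \ref{lemma 2.3}(i) should read $\mu_2/\mu_1=\lambda_2^2\omega_n^{-1}$ and (iii) should be $\nu_4^2=\lambda_2^{2a}\omega_d^{-1}$; this does not affect Proposition \ref{composite_system_semidirect}, since $\omega_d^{\pm d/2}=-1$ either way.
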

\begin{proof}
(i) By Lemma \ref{lemma 2.3}(iii) and (iv), we have $1=\nu^b_3=(\lambda^a_1)^b$. Similarly, by Lemma \ref{lemma 2.3}(iii) and (v), we have $1=\nu^a_1=(\lambda'_3)^{ab}$. The rest is Lemma \ref{lemma 2.3}(i) and (ii).

(ii)  By Lemmas \ref{lemma 2.5}(i) and \ref{lemma 2.3}(iii), we have $1=\nu^d_4\nu_1^{a\binom{d}{2}}=\nu^d_4(\lambda'_3)^{ba\binom{d}{2}}$. By Lemma \ref{lemma 2.3}(ii), we have  $\lambda'_3=(\lambda'_3)^{-1}$. Hence $\nu^d_4=(\lambda'_3)^{ab\binom{d}{2}}$.

Similarly, by Lemmas \ref{lemma 2.5}(ii) and \ref{lemma 2.3} (i), we have $\lambda_1^{-1}=\lambda_1$ and $\lambda^n_2=(\lambda_1^{-1})^{\binom{n}{2}}=\lambda_1^{\frac{n(n-1)}{2}}$.

(iii) By Lemma \ref{lemma 2.3}(iv), (v) and (i), we have $\nu^2_4=\nu^{a}_1\left(\frac{\mu_2}{\mu_1}\right)^a=\left(\frac{\mu_2}{\mu_1}\right)^a=\left(\lambda^2_2\omega_n\right)^a=\lambda^{2a}_{2}\omega_d$, since $(\omega_n)^a=e^{\frac{2\pi\mathbf{i}}{n}a}=e^{\frac{2\pi\mathbf{i}}{d}}=\omega_d$. 
\end{proof}

\begin{lemma}\label{lemma 4.1}
 If $n=2\ (\mathrm{mod}\ 4)$ and $m=2\ (\mathrm{mod}\ 4)$, then  $\lambda_1=1$, $\lambda'_3=1$,  $\nu^d_4=1$ and $\lambda_2^n=1$, where $d=\gcd(n,m)$.
\end{lemma}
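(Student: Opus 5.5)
Everything should follow from Lemma \ref{lemma 3.4} plus elementary parity facts about $a$, $b$, $d$ under the hypothesis $n\equiv m\equiv 2 \pmod 4$. The plan is first to work out these parities, then read off the four conclusions in turn.

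\emph{Parity bookkeeping.} Write $n=2n'$ and $m=2m'$ with $n',m'$ odd. Then $d=\gcd(n,m)=2\gcd(n',m')$, so $d\equiv 2\pmod 4$. It follows that $a=n/d=n'/\gcd(n',m')$ and $b=m/d$ are both odd, and hence $ab$ is odd.

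\emph{$\lambda_1$ and $\lambda'_3$.} By Lemma \ref{lemma 3.4}(i), $\lambda_1=\pm1$ and $\lambda_1^{ab}=1$. Since $ab$ is odd, $\lambda_1^{ab}=\lambda_1$, so $\lambda_1=1$. The identical argument applied to $\lambda'_3$ gives $\lambda'_3=1$.

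\emph{$\nu_4^d$ and $\lambda_2^n$.} By Lemma \ref{lemma 3.4}(ii), $\nu_4^d=(\lambda'_3)^{abd(d-1)/2}$, and since $\lambda'_3=1$ this gives $\nu_4^d=1$. Likewise Lemma \ref{lemma 3.4}(ii) gives $\lambda_2^n=\lambda_1^{n(n-1)/2}$, and $\lambda_1=1$ yields $\lambda_2^n=1$. Note that in these two steps the parity of the exponents $d(d-1)/2$ and $n(n-1)/2$ plays no role, because the bases are already $1$.

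The only step that needs any care is the parity bookkeeping, namely checking that $a$ and $b$ are both odd. Here one must use that both $n$ and $m$ have $2$-adic valuation exactly one, so that $d$ also has $2$-adic valuation exactly one and the factor $2$ cancels in $n/d$ and $m/d$. Beyond that the argument is direct substitution, and Lemma \ref{lemma 3.4}(iii) is not needed here; presumably the subsequent contradiction will combine it with these conclusions.
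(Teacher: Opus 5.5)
Your proposal is correct and follows the paper's proof: the oddness of $a$ and $b$ together with Lemma \ref{lemma 3.4}(i) forces $\lambda_1=\lambda'_3=1$, and substituting these into Lemma \ref{lemma 3.4}(ii) gives $\nu_4^d=1$ and $\lambda_2^n=1$. The paper simply asserts that $a$ and $b$ are odd; your check that $d=2\gcd(n',m')$, so the factor $2$ cancels in $n/d$ and $m/d$, fills in that step.
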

\begin{proof}
By Lemma \ref{lemma 3.4}(i), $\lambda_1=\pm1$ and $(\lambda_1)^{ab}=1$. Since $a=\frac{n}{\gcd(n,m)}$ and $b=\frac{m}{\gcd(n,m)}$ are odd, it follows that $\lambda_1=1$. Similarly, $\lambda'_3=1$. The rest follows easily from Lemma \ref{lemma 3.4}.
\end{proof}

\begin{proposition}\label{composite_system_semidirect}
 If $n=2\ (\mathrm{mod}\ 4)$ and $m=2\ (\mathrm{mod}\ 4)$, then the projective Clifford group $\overline{\CC}_{(n,m)}$ does not have the structure of a natural semidirect product.
\end{proposition}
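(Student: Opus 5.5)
The plan is to argue by contradiction, exactly in the setting prepared above. Assume that the natural exact sequence (\ref{exact6}) is right splitting, i.e. that a homomorphism $\T$ as in (\ref{assumption}) exists. Then all the preliminary computations apply: Lemmas \ref{basic}--\ref{lemma 4.1} hold for the scalars $\lambda_i,\lambda'_i,\mu_i,\mu'_i,\nu_i$ attached to the generators $\alpha,\beta,\alpha',\beta',\gamma$. The aim is to show that these scalar identities are inconsistent when $n=2\ (\mathrm{mod}\ 4)$ and $m=2\ (\mathrm{mod}\ 4)$.

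The first step is to record the parities. Since $n,m$ are both $2\ (\mathrm{mod}\ 4)$, their gcd $d=\gcd(n,m)$ is also $2\ (\mathrm{mod}\ 4)$. In particular $d$ is even and $d/2$ is an integer (in fact an odd one). Moreover $ad=n$ by the definition $a=\frac{n}{\gcd(n,m)}$.

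The key step is to take the identity of Lemma \ref{lemma 3.4}(iii), $\nu_4^2=\lambda_2^{2a}\omega_d$, and raise both sides to the power $d/2$. This gives
$$\nu_4^{d}=\lambda_2^{ad}\,\omega_d^{d/2}=\lambda_2^{n}\,e^{\pi\mathbf{i}}=-\lambda_2^{n}.$$
Now Lemma \ref{lemma 4.1}, which uses exactly the hypothesis $n=m=2\ (\mathrm{mod}\ 4)$, gives $\nu_4^d=1$ and $\lambda_2^n=1$. Substituting these yields $1=-1$, which is a contradiction. Hence no splitting homomorphism $\T$ exists, and $\overline{\CC}_{(n,m)}$ is not a natural semidirect product.

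I do not expect a real obstacle here, since the substantive work has been done in Lemmas \ref{lemma 2.3}--\ref{lemma 4.1}. The only point needing care is the choice of exponent. One must raise to $d/2$ rather than $d$: raising to the power $d$ would turn the sign factor $\omega_d^{d/2}=-1$ into $\omega_d^{d}=1$ and lose the contradiction. This choice is legitimate because $d$ is even. It is also worth double-checking that $\omega_n^a=\omega_d$, as used in Lemma \ref{lemma 3.4}(iii), so that the sign really is $-1$.
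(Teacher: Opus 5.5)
Your proposal is correct and is essentially the paper's own proof. The paper writes $d=2k$ and computes $1=\nu_4^d=(\nu_4^2)^k=\lambda_2^{n}\omega_d^{k}=\omega_d^{k}$, which contradicts the fact that $\omega_d$ has order $d>k$; this is the same step as your $\omega_d^{d/2}=-1$, applied to Lemma \ref{lemma 3.4}(iii) and Lemma \ref{lemma 4.1}. The contradiction only uses that $d$ is even, not that $d=2\ (\mathrm{mod}\ 4)$. It also survives the sign question you raise: from $Q_nP_n=\omega_nP_nQ_n$ one actually gets $P_n^{-1}Q_n^{-1}=\omega_n^{-1}Q_n^{-1}P_n^{-1}$, so Lemma \ref{lemma 2.3}(i) should carry $\omega_n^{-1}$, but $\omega_d^{\pm d/2}=-1$ in either case.
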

\begin{proof}
Consider the natural exact sequence
 \begin{equation*} \label{exact7}
     1 \rightarrow \mathcal{P}_{(n,m)} \stackrel{}{\longrightarrow} \overline{\CC}_{(n,m)}
\stackrel{}{\longrightarrow} \Sp_{[n,m]}  \rightarrow 1 .
\end{equation*}

Assume, on the contrary, that there is a right splitting homomorphism
$$\T: \mathrm{Sp}_{[n,m]}\to \overline{\CC}_{(n,m)}$$
as in (\ref{assumption}) and let $\lambda_i,\lambda'_i, \mu_i,\mu'_i, \nu_i\in U(1)$, $i=1,2,3,4$ be as in Lemma \ref{first_form}. Then, by Lemmas \ref{lemma 4.1} and \ref{lemma 3.4}(iii), it holds that $$\nu^d_4=1,\ (\lambda_2)^n=1\ \text{and} \ \nu^2_4=(\lambda_2)^{2a}\omega_d$$ where $d=\gcd(n,m)$ and $\omega_d=e^{\frac{2\pi\mathbf{i}}{d}}$.  Since $d$ is even, there is $k\in\N$, $1\leq k<d$ such that $d=2k$. Hence, as $n=ad=2ak$, it follows that
$$1=\nu_4^d=(\nu^2_4)^k=(\lambda_2)^{2ka}\omega_d^k=(\lambda_2)^{n}\omega_d^k=\omega_d^{k}\ .$$
But this is a contradiction, because the order of $\omega_d$ is $d$,  and $1\leq k<d$.

Hence there is no right splitting homomorphism and the projective Clifford group $\overline{\CC}_{(n,m)}$ does not have the structure of a natural semidirect product.
\end{proof}

\section{The case of dimension divisible by four}\label{5}

\begin{theorem}\cite[Corollary 12.2]{KorbTolar23}\label{cyclic}
Let $n = 0\ (\mathrm{mod}\ 4)$. Then the projective Clifford group $\overline{\CC}_n$ is not a natural semidirect product.
\end{theorem}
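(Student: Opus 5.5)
I would follow the scheme of the two-subsystem computation above, specialised to $k=1$. Assume there is a right splitting homomorphism $\T:\SL(2,\Z_n)=\Sp_{[n]}\to\overline{\CC}_n$ with $\overline{\pi}\circ\T=id$, and aim for a contradiction. $\SL(2,\Z_n)$ is generated by $T_2$ and $J_2$, so by Theorem \ref{theorem_1} the automorphisms $\T_{T_2}$ and $\T_{J_2}$ must have the shape
\[
\T_{T_2}:\ P_n\mapsto \lambda_1 P_n,\ \ Q_n\mapsto\lambda_2 P_nQ_n,\qquad
\T_{J_2}:\ P_n\mapsto \mu_1 Q_n^{-1},\ \ Q_n\mapsto \mu_2 P_n,
\]
with unknown phases $\lambda_i,\mu_i\in\U(1)$. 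The whole argument then reduces to collecting constraints on these four phases from defining relations of $\SL(2,\Z_n)$, exactly as in Lemmas \ref{basic}--\ref{lemma 3.4}.

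The first two constraints come directly from the earlier computations.
\begin{itemize}
\item[(a)] The relation $T_2J_2^2=J_2^2T_2$, where $J_2^2=-I_2$ is central, gives the one-qudit version of Lemma \ref{lemma 2.3}(i), with the same computation: $\lambda_1=\pm1$ and $\mu_2/\mu_1=\lambda_2^2\omega_n$.
\item[(b)] The relation $T_2^n=I_2$, together with the one-qudit analogue of Proposition \ref{proposition 3.1}(ii), gives $\lambda_2^n\lambda_1^{\binom n2}=1$.
\end{itemize}
In (b), since $4\mid n$, the exponent $\binom n2=\tfrac{n}{2}(n-1)$ is even. Hence $\lambda_2^n=1$ regardless of the sign of $\lambda_1$. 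This is exactly where the hypothesis $n\equiv0\pmod 4$ enters, in contrast with $n\equiv 2\pmod 4$, where the sign of $\lambda_1$ survives and can be used to repair the phases.

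The relations used so far do not suffice. In particular $J_2^4=I_2$ gives nothing, since $\T_{J_2}^2$ sends $P_n\mapsto(\mu_1/\mu_2)P_n^{-1}$ and applying it twice is automatically the identity. The decisive relation will be the braid-type relation of $\SL(2,\Z_n)$, namely $(J_2T_2)^3=J_2^2$ (I would check the exact sign convention for the matrices used here). I would compute $\T_{J_2T_2}$ on $P_n$ and $Q_n$, cube it, and compare the result with $\T_{J_2^2}$. Each application of $Q_n P_n=\omega_n P_nQ_n$ while reordering contributes a power of $\omega_n$. The outcome should be one further equation, of the form $\lambda_2^{c}\mu_1^{e}\mu_2^{f}=\omega_n^{s}$ for explicit small integers $c,e,f,s$.

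Finally, I would substitute $\mu_2=\mu_1\lambda_2^2\omega_n$ from (a) and use $\lambda_1=\pm1$. Raising the resulting equation to a suitable power (a multiple of $n/2$) should kill $\lambda_2$ via $\lambda_2^n=1$ and eliminate $\mu_1$. What should remain is an identity $\omega_n^{n/2}=1$, or equivalently $-1=1$, which is the desired contradiction, analogous to the final step in the proof of Proposition \ref{composite_system_semidirect}.

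The main obstacle is the bookkeeping of the $\omega_n$-factors in $(\T_{J_2T_2})^3$. A single sign error would make the contradiction disappear, and the eliminations must be arranged so that $\mu_1$ genuinely cancels, which requires taking the right combination of the braid relation and (a). If the braid relation does not directly yield a contradiction, the fallback is to add the relation $(T_2^{u}J_2)$-type relations used in the presentation of $\SL(2,\Z_n)$ from \cite{KorbTolar23}.
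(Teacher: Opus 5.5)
There is a genuine gap. The relation you call decisive produces no equation, and the ingredient that actually forces the contradiction is missing.

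With your parametrization, $\T_{J_2}\T_{T_2}$ (apply $\T_{T_2}$ first) sends $P_n\mapsto c_1Q_n^{-1}$ and $Q_n\mapsto c_2Q_n^{-1}P_n$, where $c_1=\lambda_1\mu_1$ and $c_2=\lambda_2\mu_1\mu_2$. Its square sends $P_n\mapsto c_1c_2^{-1}P_n^{-1}Q_n$ and $Q_n\mapsto c_1P_n^{-1}$. Its cube is the identity for \emph{every} choice of $c_1,c_2$; conceptually, its exponent matrix $g$ satisfies $1+g+g^2=0$. The other sign version, $(J_2^{-1}T_2)^3=J_2^2$, then follows from (a), because $\T_{J_2}^4=id$ holds identically and $\T_{J_2}^2$ commutes with $\T_{T_2}$. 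So no equation of the form $\lambda_2^c\mu_1^e\mu_2^f=\omega_n^s$ appears.

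Worse, as long as the $\lambda_i,\mu_i$ are treated as free elements of $\U(1)$, no relation of $\SL(2,\Z_n)$ can give a contradiction. For $n=4$, the relations $J_2^4=I_2$, $[J_2^2,T_2]=I_2$, the braid relation and $T_2^4=I_2$ already present $\SL(2,\Z_4)$. Modulo the central element $J_2^2$ they give the $(2,3,4)$ triangle group of order $24$, so the presented group has order at most $48=|\SL(2,\Z_4)|$. All your equations are solved by $\lambda_1=\lambda_2=\mu_1=1$, with $\mu_2$ fixed by (a). Hence the fallback to further relations from \cite{KorbTolar23} cannot rescue the argument either.

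The missing idea is that the phases are not free. $\T_{T_2}$ is an automorphism of $\H_n$, so it must respect $Q_n^n=I_n$. Since $(P_nQ_n)^k=\omega_n^{\binom{k}{2}}P_n^kQ_n^k$, you get $(P_nQ_n)^n=\omega_n^{n(n-1)/2}I_n=-I_n$ for even $n$. The condition $(\lambda_2P_nQ_n)^n=I_n$ then forces $\lambda_2^n=-1$. This contradicts the equality $\lambda_2^n=1$ that you already derived from (a) and (b) when $4\mid n$. So the proof closes at that point, with no braid relation needed; the free solution $\lambda_2=1$ above is exactly what this constraint excludes.

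This also matches your own remark about $n\equiv2\pmod 4$: there the sign $\lambda_1=-1$ reconciles (b) with $\lambda_2^n=-1$. Proposition \ref{two} shows this concretely, with $TP_2T^{-1}=-P_2$ and $TQ_2T^{-1}=-\mathbf{i}P_2Q_2$. For reference, the paper does not reprove this statement; it cites \cite{KorbTolar23}, whose argument goes through presentations of $\SL(2,\Z_N)$ and $\SL(2,\Z_N)\ltimes\Z_N^2$.
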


\begin{theorem}\label{main_theorem_1}
Let $k\geq 2$ and $n_1,\dots,n_k\geq 2$  be positive integers. If $N=n_1\cdots n_k$ is divisible by four, then both Clifford group $\CC_{(n_1,\dots,n_k)}$ and the projective Clifford group $\overline{\CC}_{(n_1,\dots,n_k)}$ do not have the structure of a natural semidirect product.
\end{theorem}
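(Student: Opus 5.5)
By Proposition \ref{linear_to_projective}, it suffices to show that the projective Clifford group $\overline{\CC}_{(n_1,\dots,n_k)}$ is not a natural semidirect product. If it were, the Clifford group would not be one either (by contraposition). The plan is to reduce the configuration space, via Proposition \ref{equivalent}, to a convenient normal form. Then Proposition \ref{short} will cut the system down to a subsystem where either Theorem \ref{cyclic} or Proposition \ref{composite_system_semidirect} gives the contradiction.

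First I decompose $A=\Z_{n_1}\oplus\cdots\oplus\Z_{n_k}$ into its primary components: $A\cong\Z_{q_1}\oplus\cdots\oplus\Z_{q_p}$ with each $q_i$ a prime power. Since $4\mid N=|A|$, the $2$-primary part of $A$ has order at least $4$. Two cases arise.

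\emph{Case (a):} some $q_i=2^s$ with $s\geq 2$.

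\emph{Case (b):} all $2$-power factors equal $2$, and there are at least two of them.

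In case (a), reorder the factors (Proposition \ref{equivalent}) so that $q_1=2^s$, and let $r$ be the product of all odd $q_j$. The Chinese remainder theorem gives
$$A\cong \Z_{2^s r}\oplus \Z_{q'_2}\oplus\cdots,$$
where the remaining $q'_j$ are the other powers of two. By Proposition \ref{equivalent}(2), it suffices to treat the system $(2^s r, q'_2,\dots)$. If there are no remaining factors, Theorem \ref{cyclic} applies directly, since $4\mid 2^s r$. Otherwise, suppose this projective Clifford group were a natural semidirect product. Proposition \ref{short} with $\ell=1$ would then make $\overline{\CC}_{2^s r}$ one, contradicting Theorem \ref{cyclic}.

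In case (b), write $A\cong\Z_{2r}\oplus\Z_2\oplus\cdots\oplus\Z_2$, absorbing the odd part $r$ into the first copy of $\Z_2$ via the Chinese remainder theorem; there is at least one further $\Z_2$. Here $2r\equiv 2\pmod 4$ and $2\equiv 2\pmod 4$. Proposition \ref{short} with $\ell=2$ reduces to $\overline{\CC}_{(2r,2)}$, or no reduction is needed if $k$ is already $2$. Proposition \ref{composite_system_semidirect} then says this group is not a natural semidirect product, which again gives a contradiction.

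The main obstacle is bookkeeping rather than mathematics. Proposition \ref{equivalent} transports the natural splitting property between isomorphic configuration spaces written in different ways, and Proposition \ref{short} only truncates the trailing factors. So I must make sure that the chosen normal form places the relevant factors first; this is exactly what the permutation in Proposition \ref{equivalent} allows. One also checks that the hypothesis $k\geq2$ is harmless: after regrouping, the system may have only one factor, and then Theorem \ref{cyclic} is used directly.
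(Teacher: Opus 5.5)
Your overall strategy matches the paper's. You pass to the projective group via the contrapositive of Proposition \ref{linear_to_projective}, rearrange the configuration space with Proposition \ref{equivalent}, truncate with Proposition \ref{short}, and finish with Theorem \ref{cyclic} or Proposition \ref{composite_system_semidirect}. Your sentence ``If it were, the Clifford group would not be one either'' garbles the logic. The contrapositive says: if $\overline{\CC}_{(n_1,\dots,n_k)}$ is \emph{not} a natural semidirect product, then neither is $\CC_{(n_1,\dots,n_k)}$.

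The genuine problem is your normal form. The Chinese remainder theorem merges cyclic factors only when their orders are coprime, and the odd part of $A$ need not be cyclic. Your claimed isomorphisms are $A\cong\Z_{2^s r}\oplus\Z_{q'_2}\oplus\cdots$ in case (a) and $A\cong\Z_{2r}\oplus\Z_2\oplus\cdots\oplus\Z_2$ in case (b). Both fail as soon as two odd primary components share a prime. For example, $\Z_4\oplus\Z_3\oplus\Z_3\not\cong\Z_{36}$, and for $(n_1,n_2)=(6,6)$ one has $\Z_6\oplus\Z_6\not\cong\Z_{18}\oplus\Z_2$. In such cases Proposition \ref{equivalent} cannot be invoked, and the argument as written breaks down.

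The repair is simply not to absorb anything. Proposition \ref{short} discards trailing tensor factors of arbitrary orders. So reorder the primary components so that $2^s$ (case (a)) or two copies of $2$ (case (b)) come first. Then truncate to $\overline{\CC}_{2^s}$ or $\overline{\CC}_{(2,2)}$. These are excluded by Theorem \ref{cyclic}, respectively by Proposition \ref{composite_system_semidirect} with $n=m=2$. Each $n_i\geq 2$ contributes at least one primary factor, so there are at least $k\geq 2$ factors. Hence Proposition \ref{short} always applies, and your ``no remaining factors'' special case disappears.

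The paper is more direct still and skips the primary decomposition entirely. Since $4\mid N$, either some $n_{i_0}\equiv 0\pmod 4$, or there are two indices with $n_{j_0}\equiv n_{j_1}\equiv 2\pmod 4$. After permuting these to the front, Proposition \ref{short} reduces the problem to $\overline{\CC}_{n_{i_0}}$ or $\overline{\CC}_{(n_{j_0},n_{j_1})}$. Your explicit appeal to Proposition \ref{linear_to_projective} for the non-projective Clifford group is correct; the paper leaves that step implicit.
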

\begin{proof}
If $N=n_1\cdots n_k$ is divisible by four, then either $n_{i_0}=0\ (\mathrm{mod}\ 4)$ for some $i_0$ or  $n_{j_0}=2\ (\mathrm{mod}\ 4)$ and $n_{j_1}=2\ (\mathrm{mod}\ 4)$ for some $j_0\neq j_1$. In view of Proposition \ref{equivalent}, we may therefore assume without loss of generality that:

either $n_{1}=0\ (\mathrm{mod}\ 4)$ and then our assertion holds by  Theorem \ref{cyclic} and Proposition \ref{short},

or  $n_1=2\ (\mathrm{mod}\ 4)$ and $n_2=2\ (\mathrm{mod}\ 4)$ and then our assertion follows from  Propositions \ref{composite_system_semidirect} and \ref{short}.
\end{proof}

\section{The case of even dimension not divisible by four}

In this section we show that if the dimension of a multipartite system with configation space $A=\Z_{n_1}\oplus\cdots\oplus\Z_{n_k}$ is divisible by $2$ but not by $4$, then the correspoding Clifford group (as well as the projective one) are indeed natural semidirect products. This fact seems to be not (broadly) known. The approach is simply to consider the decomposition $A\cong \Z_2 \oplus B$, where $B$ is an abelian group of odd order $m$. For $B$ it is known that the corresponding Clifford groups are semidirect product, i.e. their short exact sequences are right splitting (see e.g.\cite{DuttaPrasad}). For $\Z_2$ we write an explicit form of the right splitting homomorphism.

\begin{proposition}\label{two}
The Clifford group $\CC_2$ is a natural semidirect product.
\end{proposition}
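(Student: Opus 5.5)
The plan is to write down an explicit splitting. For $n=2$ we have $\H_{(2)}$ generated (up to phases) by $P_2,Q_2$. Up to the convention for $P_2,Q_2$, these are the Pauli matrices $X=\begin{mat2}{0}{1}{1}{0}\end{mat2}$ and $Z=\diag(1,-1)$, together with $Y=\mathbf{i}XZ$. The symplectic group $\Sp_{[2]}=\SL(2,\Z_2)$ has order $6$ and is isomorphic to the symmetric group $S_3$. It is generated by the two involutions $[T_2]$ and $[J_2]$ (note $J_2\equiv J_2^{-1}$ mod $2$), and it has the presentation
$$\langle s,t\mid s^2=t^2=(st)^3=1\rangle .$$
So it suffices to find unitaries $S,T\in\CC_2$ with $\pi(S)=[J_2]$, $\pi(T)=[T_2]$ satisfying $S^2=T^2=(ST)^3=I_2$ \emph{exactly}, not merely up to a phase. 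Then $s\mapsto S$, $t\mapsto T$ defines a homomorphism $\vartheta:\Sp_{[2]}\to\CC_2$. Since $\pi\circ\vartheta$ agrees with the identity on the generators, we get $\pi\circ\vartheta=id$, i.e.\ a right splitting.

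For the candidates I will take the traceless involutions built from Pauli matrices. For $J_2$, which swaps the roles of $P_2$ and $Q_2$, I take the Hadamard matrix $H=\tfrac{1}{\sqrt2}(X+Z)$. It satisfies $H^2=I_2$ and conjugates $X\leftrightarrow Z$. For $T_2$, which fixes one generator and sends the other $g$ to $g\cdot(\text{fixed one})$, I take $\varepsilon\tfrac{1}{\sqrt2}(A+B)$, where $A,B$ are the two Pauli matrices exchanged by that transvection and $\varepsilon=\pm1$. Such a matrix squares to $\tfrac12(A^2+B^2+AB+BA)=I_2$, because distinct Paulis anticommute. It conjugates $A\leftrightarrow B$ and sends the third Pauli to its negative. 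Using the concrete identification of $P_2,Q_2$ with $X,Z$, I will check via Theorem~\ref{theorem_1} that the images under $\pi$ are exactly $[J_2]$ and $[T_2]$. Phases are irrelevant for $\pi$, so only the permutation of $\{X,Y,Z\}$ up to sign matters.

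The remaining point, which I expect to be the only real obstacle, is the braid-type relation $(ST)^3=I_2$ with the correct sign. The product $ST$ has the form $\tfrac12\bigl(I_2+\mathbf{i}(\pm X\pm Y\pm Z)\bigr)$, using $XY=\mathbf{i}Z$ and its cyclic analogues. In other words $ST=\cos\theta\, I_2+\mathbf{i}\sin\theta\,(\vec n\cdot\vec\sigma)$ with $|\vec n|=1$ and $\theta=\pm\pi/3$, so $(ST)^3=\cos(3\theta)I_2=-I_2$ for one choice of $\varepsilon$. Flipping $\varepsilon$ replaces $ST$ by $-ST$, whose cube is $+I_2$. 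This keeps $T^2=I_2$ and does not change $\pi(T)$. For example, with $H=\tfrac1{\sqrt2}(X+Z)$ and $T=-\tfrac1{\sqrt2}(X+Y)$ one gets $HT=-\tfrac12(I_2+\mathbf{i}(Z+Y-X))$ and $(HT)^3=I_2$.

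Having fixed the signs, the relations of the presentation hold in $\U(2)$. Hence $\vartheta$ is a well-defined homomorphism with $\pi\circ\vartheta=id_{\Sp_{[2]}}$, and the natural exact sequence of $\CC_2$ splits, which is the statement of the proposition.
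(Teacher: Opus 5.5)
Your proposal is correct and follows essentially the paper's proof. The paper also uses the presentation $s^2=t^2=(st)^3=1$ of $SL(2,\Z_2)\cong S_3$, takes $S$ to be the Hadamard matrix and $T=\tfrac{1}{\sqrt2}\begin{pmatrix}-1&-\mathbf{i}\\ \mathbf{i}&1\end{pmatrix}=\tfrac{1}{\sqrt2}(Y-Z)$, and checks $S^2=T^2=(ST)^3=I_2$ exactly before extending to a homomorphism.

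One small slip is in your sample choice of $T$. With the paper's convention $P_2=X$, $Q_2=Z$ (which its identity $TP_2T^{-1}=-P_2$ requires), $[T_2]$ fixes $X$ and exchanges $Y$ and $Z$. So $-\tfrac{1}{\sqrt2}(X+Y)$ actually maps to $\begin{pmatrix}1&0\\1&1\end{pmatrix}$, not to $[T_2]$. The fix is harmless: take $T=-\tfrac{1}{\sqrt2}(Y+Z)$, for which your sign argument gives $(HT)^3=I_2$. Alternatively, note that any two distinct involutions of $S_3$ satisfy the same presentation, so the splitting goes through either way.
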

\begin{proof}
Consider the natural short exact sequence \begin{equation*}
     1 \rightarrow \mathcal{H}_{2} \stackrel{\nu}{\longrightarrow} \mathcal{C}_2
\stackrel{\pi}{\longrightarrow} SL(2,\Z_2)  \rightarrow 1
\end{equation*}
We show that it is is right splitting.

The group $SL(2,\Z_2)$ is isomorphic to the permutation group  $S_3$ and has a presentation with  two generators $s=\begin{mat2}{0}{1}{1}{0}\end{mat2}$ and $t=\begin{mat2}{1}{1}{0}{1}\end{mat2}$ fulfilling the defining relations $s^2=t^2=(st)^3=id$.

Set two complex matrices $S=\tfrac{\sqrt{2}}{2}\begin{mat2}{1}{1}{1}{-1}\end{mat2}$ and $T=\tfrac{\sqrt{2}}{2}\begin{mat2}{-1}{-\mathbf{i}}{\mathbf{i}}{1}\end{mat2}$. It is easy to verify that $S,T\in \U(2)$. Further, it holds that

$$SP_2S^{-1}=Q_2,\ \ SQ_2S^{-1}=P_2$$
$$TP_2T^{-1}=-P_2,\ \ TQ_2T^{-1}=-\mathbf{i}\cdot P_2Q_2\ .$$

This implies that $S,T\in \mathcal{C}_2$ and that the corresponding matrices of automorphisms inducing by $S$ and $T$ are $s$ and $t$, i.e. $\pi(S)=s$ and $\pi(T)=t$.

Now, as we can easily check, $S^2=T^2=(ST)^3=I_2$, and therefore there is a (unique) homomorphism  $$\vartheta:SL(2,\Z_2)\to \CC_2$$ such that $\vartheta(s)=S$ and $\vartheta(t)=T$. Since $\pi(\vartheta(s))=s$ and $\pi(\vartheta(t))=t$ and $s,t$ are generators of $SL(2,\Z_2)$, we obtain that $\pi\circ\vartheta=id_{SL(2,\Z_2)}$.
Hence $\vartheta$ is the right splitting homomorphisms and, consequently, $\CC_2$ is a natural semidirect product.
\end{proof}

\begin{theorem}\label{main_theorem_2}
Let $k\geq 1$ and $n_1,\dots,n_k\geq 2$  be positive integers and $N=n_1\cdots n_k$. If $N=2$ (mod $4$), then both Clifford group $\CC_{(n_1,\dots,n_k)}$ and the projective Clifford group $\overline{\CC}_{(n_1,\dots,n_k)}$ are natural semidirect products.
\end{theorem}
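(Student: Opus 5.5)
By Proposition \ref{linear_to_projective}, it is enough to show that the linear Clifford group $\CC_{(n_1,\dots,n_k)}$ is a natural semidirect product. Since $N\equiv 2 \pmod 4$, the configuration space satisfies $\Z_{n_1}\oplus\cdots\oplus\Z_{n_k}\cong \Z_2\oplus\Z_{m_1}\oplus\cdots\oplus\Z_{m_\ell}$ with all $m_i$ odd (for $k=1$ and $N=2$ the list of $m_i$ is empty and Proposition \ref{two} already gives the claim). By Proposition \ref{equivalent}(1), I may therefore assume that $(n_1,\dots,n_k)=(2,m_1,\dots,m_\ell)$, and put $m=m_1\cdots m_\ell$, which is odd.

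The plan is to build the splitting as a tensor product of two known splittings.

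\emph{Step 1: the symplectic group decomposes.} Since $\gcd(2,m_i)=1$, Definition \ref{def_11} says that every off-diagonal block between the first index and an index $j>1$ has the form $\tfrac{2}{\gcd(2,m_j)}A_{1j}=2A_{1j}$. Modulo the congruence $\equiv$ (multiply by $\mathbb{D}$, whose first entry is $\lcm/2$, and reduce modulo $\lcm$), such a block vanishes. The same happens for the block $(j,1)$, which is $m_jA_{j1}$ and becomes $\frac{\lcm}{m_j}m_jA_{j1}\equiv 0$. So every class $[\mathbb{H}]$ has a block-diagonal representative, and this yields an isomorphism
\[
\Sp_{[2,m_1,\dots,m_\ell]}\cong \SL(2,\Z_2)\times\Sp_{[m_1,\dots,m_\ell]},\qquad ([A],[\mathbb{G}])\mapsto\left[\begin{smat2}{A}{0}{0}{\mathbb{G}}\end{smat2}\right].
\]
Checking that this map is a well-defined bijective homomorphism is the main bookkeeping step. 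The symplectic condition also splits blockwise, and $\Sp_{[2]}=\SL(2,\Z_2)$.

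\emph{Step 2: two splittings.} Proposition \ref{two} gives a homomorphism $\vartheta_1:\SL(2,\Z_2)\to\CC_2$ with $\pi_1\circ\vartheta_1=id$. For the odd system, I would invoke the known result (e.g. \cite{DuttaPrasad}, or Weil representation arguments) that the natural sequence of $\CC_{(m_1,\dots,m_\ell)}$ splits, which gives $\vartheta_2$.

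\emph{Step 3: tensor them.} Define
\[
\vartheta([A],[\mathbb{G}])=\vartheta_1([A])\otimes\vartheta_2([\mathbb{G}]).
\]
This is a homomorphism because $(X\otimes Y)(X'\otimes Y')=XX'\otimes YY'$ and the two factors commute. Its values are unitary and lie in the Clifford group, since conjugation by $U_1\otimes U_2$ acts factorwise on $P^{i}Q^{j}\otimes W$. By Theorem \ref{theorem_1}, the matrix it induces is the block-diagonal $\mathrm{diag}(A,\mathbb{G})$, so $\pi\circ\vartheta=id$.

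The main obstacle I expect is Step 1. One has to argue carefully that the congruence $\equiv$ kills the off-diagonal blocks and respects the block structure, rather than merely giving an abstract isomorphism. It also has to be compatible with the map $\pi$, meaning the action on Kronecker products of Pauli matrices, which is what Step 3 needs.
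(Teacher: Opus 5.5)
Your proposal is correct and follows essentially the same route as the paper. You reduce to the linear Clifford group of $(2,m_1,\dots,m_\ell)$ via Propositions \ref{equivalent} and \ref{linear_to_projective}, observe that $\Sp_{[2,m_1,\dots,m_\ell]}$ consists of block-diagonal classes, and tensor the splitting of $\CC_2$ from Proposition \ref{two} with the known splitting for the odd-order part. Your explicit check that multiplying by $\mathbb{D}$ turns the off-diagonal blocks into multiples of $\lcm(2,m_1,\dots,m_\ell)$ supplies a detail the paper only asserts.
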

\begin{proof}
If $N=2$, then the assertion follows from Propositions  \ref{equivalent} and \ref{two}.

If $N>2$, then  $N=2m$ for some odd integer $m\geq 3$. Then there are positive integers $\ell\geq 1$ and $m_1,\dots,m_\ell\geq 2$ such that $\Z_{n_1}\oplus\cdots\oplus\Z_{n_k}\cong\Z_2\oplus  B$ where  $B=\Z_{m_1}\oplus\cdots\oplus\Z_{m_\ell}$ an the order of $B$ is odd. According to Propositions  \ref{equivalent} and \ref{linear_to_projective}, it is enough to show that the exact sequence
 \begin{equation} \label{11}
	1 \rightarrow \H_{(2,m_1,\dots,m_\ell)} \stackrel{\nu}{\rightarrow} \CC_{(2,m_1,\dots,m_\ell)}
	\stackrel{\pi}{\rightarrow} \Sp_{[2,m_1,\dots,m_\ell]}  \rightarrow 1 .
\end{equation}
is right splitting. Since $\gcd(2,n_i)=1$ for every $i=1,\dots,k$, it follows, according to the Definition \ref{def_11} of $\Sp_{[2,m_1,\dots,m_\ell]}$, that every element of this group is of the form of block matrices $\left[\begin{array}{cc}\mathbb{H}&0\\ 0&\mathbb{G}\end{array}\right]$ where
$[\mathbb{H}]\in \Sp_2=SL(2,\Z_2)$ and $[\mathbb{G}]\in \Sp_{[m_1,\dots,m_k]}$. The abelian groups of odd order provide right splitting of the natural exact sequences for their Clifford groups (for the proof see e.g. \cite{DuttaPrasad}). Hence, as $B$ is of odd order, for the exact sequence
 \begin{equation*} \label{12}
	1 \rightarrow \H_{(m_1,\dots,m_\ell)} \stackrel{\nu'}{\rightarrow} \CC_{(m_1,\dots,m_\ell)}
	\stackrel{\pi'}{\rightarrow} \Sp_{[m_1,\dots,m_\ell]}  \rightarrow 1 .
\end{equation*}
there is a right splitting homomorphism $\vartheta_1: \Sp_{[m_1,\dots,m_\ell]}\to \CC_{(m_1,\dots,m_\ell)}$ such that $\pi'\circ\vartheta_1=id_{\Sp_{[m_1,\dots,m_\ell]}}$. Further, by Proposition \ref{two}, there is a right splitting homomorphism $\vartheta_2: SL(2,\Z_2)\to \CC_2$ such that $\pi''\circ\vartheta_2=id_{SL(2,\Z_2)}$ for the exact sequence
 \begin{equation*}
	1 \rightarrow \mathcal{H}_{2} \stackrel{\nu''}{\longrightarrow} \mathcal{C}_2
\stackrel{\pi''}{\longrightarrow} SL(2,\Z_2)  \rightarrow 1.
\end{equation*}

Now, we set  $$\vartheta\left(\left[\begin{array}{cc}\mathbb{H}&0\\ 0&\mathbb{G}\end{array}\right]\right)=\vartheta_2([\mathbb{H}])\otimes\vartheta_1([\mathbb{G}])$$ and it can be easily checked that $\vartheta$ is a right splitting homomorphism for the exact sequence (\ref{11}).
\end{proof}

\bigskip

\section{Conclusions}

In this paper we have fully answered a natural question coming already from classical mechanics:
is the Clifford group a semidirect product?

The answer (see Theorems \ref{main_theorem_1} and \ref{main_theorem_2}) is positive  if and only if  the dimension of the underlying Hilbert space  is \emph{not} divisible by four. Although in even dimensions the metaplectic group is usually considered  (as a covering of the symplectic group) and half-integer lattice states are used, the case when these are \emph{not} needed (i.e. the case of dimension $2m$ with $m$ beeing odd) seemed to be not known so far in full generality. In applications, usually higher power of two appear, so then the metaplectic group is necessary indeed, but the special  single case of the dimension $2m$, $m$ odd, sugests a question what of theoretical relations and applications it may allow.

\end{document}